\newtheorem{definition}{Definition}
\newtheorem{lemma}{Lemma}
\newtheorem{theorem}{Theorem}
\newtheorem{example}{Example}
\begin{document}
\title{Capacity and Scheduling of Access Points for Multiple Live Video Streams}
\numberofauthors{2}
\author{
\alignauthor
I-Hong Hou\\
\affaddr{CESG and Department of ECE}\\
\affaddr{ Texas A\&M University}\\
\affaddr{College Station, TX 77843, USA}\\
\email{ihou@tamu.edu}
\alignauthor
Rahul Singh\\
\affaddr{CESG and Department of ECE}\\
\affaddr{ Texas A\&M University}\\
\affaddr{College Station, TX 77843, USA}\\
\email{rsingh1@tamu.edu} 
}
\maketitle
\begin{abstract}
This paper studies the problem of serving multiple live video streams to several different clients from a single access point over unreliable wireless links, which is expected to be major a consumer of future wireless capacity. This problem involves two characteristics. On the streaming side, different video streams may generate variable-bit-rate traffic with different traffic patterns. On the network side, the wireless transmissions are unreliable, and the link qualities differ from client to client. In order to alleviate the above stochastic aspects of both video streams and link unreliability, each client typically buffers incoming packets before playing the video. The quality of the video playback subscribed to by each flow depends, among other factors, on both the delay of packets as well as their throughput.

In this paper we characterize precisely the capacity of the wireless video server in terms of what combination of joint per-packet-delays and throughputs can be supported for the set of flows, as a function of the buffering delay introduced at the server.  

We also address how to schedule packets at the access point to satisfy the joint per-packet-delay-throughput performance measure. We test the designed policy on the traces of three movies. From our tests, it appears to outperform other policies by a large margin.

\end{abstract}

\section{Introduction}  \label{section:introduction}

Multimedia applications for video streaming have been predicted to become a dominant portion of future wireless traffic \cite{cisco}. In order to provide smooth playback to end users with multimedia applications, their packets need to be delivered in a timely manner and with a sufficient throughput. Providing the per-packet delay guarantees is particularly challenging for two reasons. First, on the application side, the video streams generate variable-bit-rate (VBR) traffic that congests the network when multiple streams generate a burst of packets at the same time. Second, on the network side, wireless transmissions are inherently unreliable due to shadowing, fading, and interference.

In order to mitigate the effect of the uncertainties in traffic bit rate as well as channel reliability, packets are typically buffered at the receivers. Each receiver waits a specified amount of time buffering incoming packets before it starts playing the video. On the positive side, this approach has the benefit of greatly reducing the impact of network congestion and channel unreliability. However, buffering also increases delay as the receiver cannot start playing the video immediately. For live video streaming, such as sports events and instant news, it is important to provide smooth playback while waiting only a reasonably small amount of time before playing the video.

In this paper, we analyze a model that addresses the three characteristics of stochastic packet arrivals, unreliable wireless channel, and buffering delay. Since the performance of the flows depends on the buffering policy, we characterize the precise combination of per-packet-delays and throughputs of the several video flows that the access point can achieve, for each value of the buffering delay deliberately introduced at the access point. We also determine a simple online scheduling policy, the Earliest Positive-Debt Deadline First (EPDF) policy, for serving the live video streams that achieves the delay-throughput capacity. This policy does not require knowledge of the exact traffic pattern of each video stream, enhancing its suitability for implementation.

We test the proposed EPDF policy on three movie traces. We compare its performance with three other well known policies in ns-2. The EPDF policy outperforms the other policies by a large margin on these three video traces. We further investigate a range of practical issues, including the tradeoff between per-packet delays and achievable throughputs. We also propose a modification to enhance short-term performance, and investigate the tradeoff between long-term and short-term performance guarantees. While theoretical studies show that long-term performance cannot be guaranteed when the EPDF policy is modified, our simulation results suggest that, for practical scenarios where packets are not adversarily generated, improving short-term performance only results in negligible impact on long-term performance.

The rest of the paper is organized as follows: Section \ref{section:related} summarizes existing work. Section \ref{section:model} introduces the analytical model for live video streaming over wireless links. Section \ref{section:feasibility} characterizes the per-packet-delay versus throughput for the buffering delay introduced. Section \ref{section:scheduling} proposes an online scheduling policy and proves that it is capacity achieving Section \ref{section:simulations} provides the simulation results based on real video traces. Finally, Section \ref{section:conclusions} concludes the paper.

\section{Related Work}	\label{section:related}

There have been several studies on providing delay guarantees for flows that require delay guarantees over wireless networks. Liu, Wang, and Giannakis \cite{QL06} have proposed a scheduler that updates link priorities dynamically to provide QoS. Dua and Bambos \cite{AD07} have proposed a heuristic that jointly considers wireless channel quality and packet deadlines. Raghunathan et al \cite{VR08} and Shakkottai and Srikant \cite{SS02} have proposed scheduling policies that minimize the number of packets that exceed their delay bounds. These studies cannot provide guarantees on the throughput of packets delivered on-time for each flow.

Hou el al \cite{IH09} have proposed a model that jointly considers the per-packet delay bound and the per-flow throughput requirement. This model has been extended to consider variable-bit-rate traffic \cite{IH09Hoc}, fading wireless channels \cite{IH10}\cite{JJ11}, the mixture of real-time and non-real-time traffic \cite{JJ10}, and multi-hop wireless transmissions \cite{RL11}. However, these studies assume that all flows are synchronized and generate packets at the same time, and the results depend critically on this assumption. Moreover, they assume that all flows start playback immediately without buffering any packets, which is a critical feature of the playback process. Dutta el al \cite{PD12} have studied serving video streams when receivers may buffer packets before playing them. They assume, however, that all packets are available at the server when the system starts, which is applicable to on-demand videos, but not to live videos.

Stockhammer, Jenkac, and Kuhn \cite{TS04}, and Liang and Liang\cite{GL07}\cite{GL08} have studied the influence of buffer sizes on the performance of real-time flows. Li, Claypool, and Kinicki \cite{ML09} have studied the problem of choosing the optimal buffer size for wireless multimedia streaming. However, their studies focus only on the scenario where there is only one flow in the system, and ignore the critical feature of contention for the medium.

\section{System Model}  \label{section:model}
\begin{figure}[t]
\subfigure[Network topology]{\label{fig:model:system:topology}
\includegraphics[width=3in]{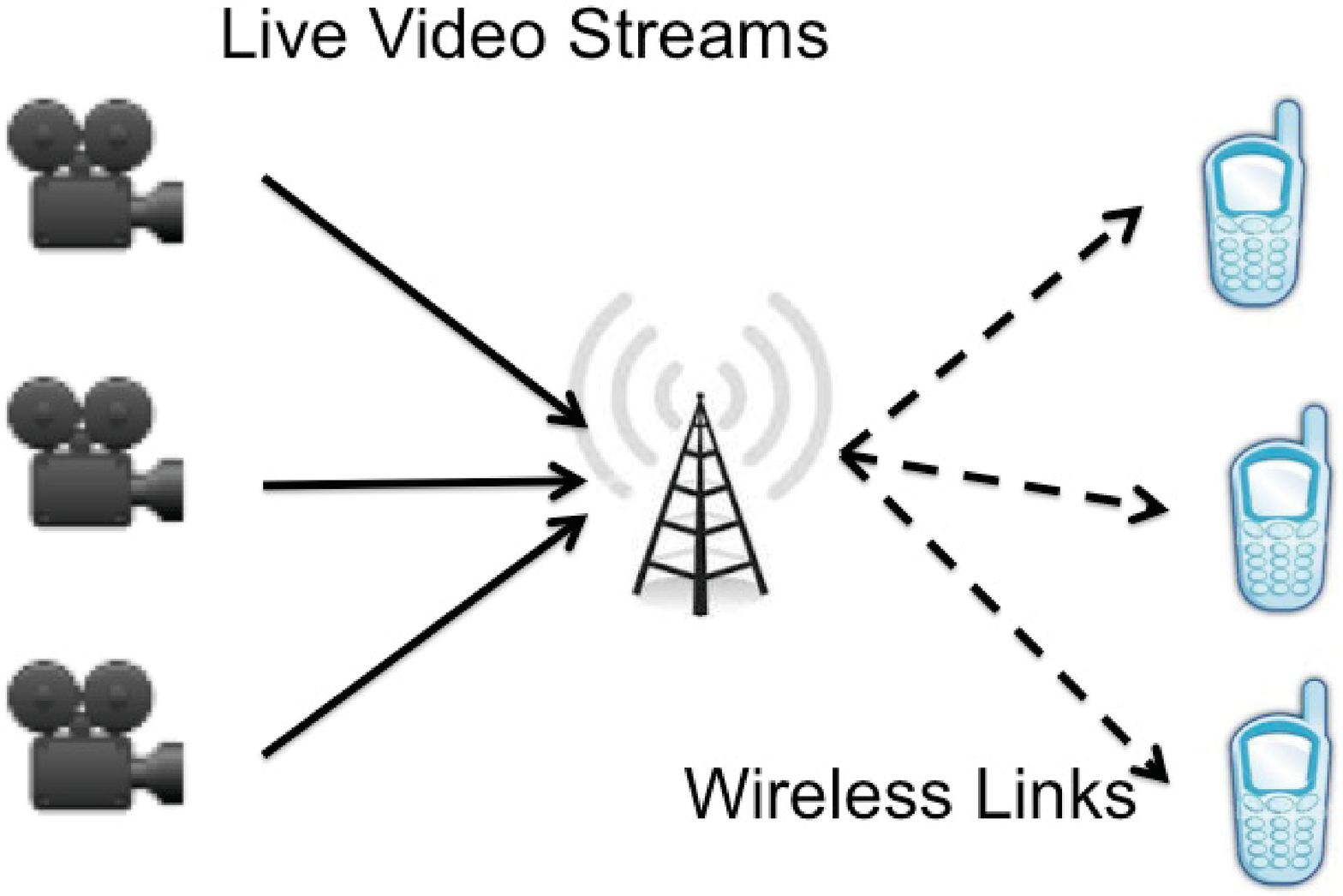}}
\subfigure[Packet arrivals and deadlines of a stream, with sequence numbers indicated.]{\label{fig:model:system:arrival}
\includegraphics[width=3in]{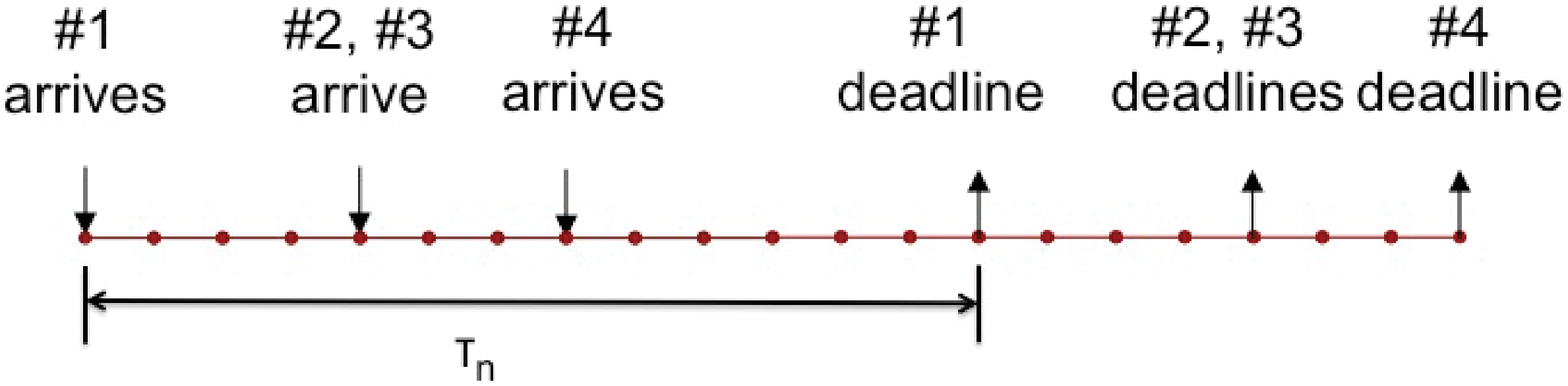}}
\caption{System for wireless live video streaming}\label{fig:model:system}
\end{figure}

Consider a system with $N$ wireless clients, $\{1,2,\dots,N\}$, and one access point (AP). Each client subscribes to a live video stream through the AP. Fig. \ref{fig:model:system:topology} illustrates such a system. 

Video streams generate variable-bit-rate traffic, and the exact number of packets generated in a time slot is influenced by not only the video encoding mechanism, but also the context of the current frame. We suppose that the number of packets generated by each flow in a time slot is described by some irreducible finite-state Markov chain, and is bounded. However, we do not assume that either the AP or the clients know this Markov chain model. We only suppose that each client $n$ knows the long-term average number of packets generated by its subscribed video stream, which we denote by $r_n$.  In each time slot the access point can transmit one such packet, including all overhead.

To smooth the bursty arrival process as well as channel unreliability, each client buffers incoming packets before it commences playing the video. To be more specific, a video frame that is generated at time slot $t$ for client $n$ will be played by the client at time slot $t+\tau_n$, with the AP informed of the value of $\tau_n$. This is equivalent to specifying a hard per-packet delay bound of $\tau_n$ time slots for the video stream of client $n$. Fig. \ref{fig:model:system:arrival} illustrates an example of packet arrivals and their respective deadlines. If a packet cannot be delivered within its delay bound, it is dropped from the system. Packet drops result in glitches in the live videos, and hence need to be limited.

We model the channel reliability between the AP and client $n$ by a probability $p_n$. When the AP schedules a transmission for client $n$, the transmission is successful with probability $p_n$. The AP obtains immediate feedback on whether a transmission is successfully received by ACKs. A transmission is considered to be successful if both the packet and ACK are correctly received. When a transmission fails, the AP may retransmit the packet, if the packet has not already expired, i.e., exceeded its delay bound, or it may schedule a different packet in the next time slot.

For enforcing a minimum quality for each video, each client $n$ requires that its throughput is at least $q_n$ packets per time slot; that is, it requires that $\liminf_{T\rightarrow\infty}\frac{\sum_{t=1}^Te_n(t)}{T}\geq q_n$, where $e_n(t)$ is the indicator function that a packet for client $n$ is delivered in time slot $t$.

We study two important problems for serving live video streams over wireless links. One is to characterize the capacity region of achievable vectors of throughput requirements $[q_n]$, given the traffic pattern, the delay bounds, and the channel reliabilities. We will say that a vector of throughputs $[q_n]$ is strictly in the capacity region if there is a scheduling policy for the access point for which each client $n$ obtains a throughput $q_n+\alpha$ for some $\alpha >0$. The other problem addressed is to design a simple online scheduling policy that can support all vectors of throughputs that are strictly in the capacity region.

\section{Deriving the Capacity Region}	\label{section:feasibility}

We now address the problem of deriving the capacity region of a system. Since packet generations of each live video stream are assumed to be modeled as an irreducible finite-state Markov chain, we have the following:

\begin{lemma}	\label{lemma:feasibility:Markov}
If a vector of throughputs is in the capacity region, it can be supported by a randomized stationary scheduling policy that makes scheduling decisions solely based on the set of packets then available for transmission and their respective deadlines.
\end{lemma}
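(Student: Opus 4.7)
The strategy has two ingredients: a compactness/averaging argument that distills any achieving scheduler down to a stationary randomized policy on the joint system state, followed by a marginalization that strips away dependence on the hidden Markov state of the video generators. Let $M_t$ denote the joint internal state of the $N$ video-generator Markov chains and $Q_t$ the multiset of currently buffered packets together with their remaining time-to-deadline. Since per-slot arrivals are bounded, each generator chain is finite, and the delays $\tau_n$ are finite, the augmented state $X_t := (M_t, Q_t)$ and the action $A_t$ (which client to serve, or idle) lie in a finite set.

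\emph{Step 1 (extract a stationary occupation measure).} Given any policy $\pi$ achieving throughputs $(q_n)$, define the averaged state--action measure $\mu_T(x,a) := T^{-1}\sum_{t=1}^T \mathbb{P}^{\pi}(X_t = x,\, A_t = a)$. By finiteness of the state--action space and Bolzano--Weierstrass, a subsequence $\mu_{T_k}$ converges to some $\mu^{*}$. A routine computation shows that $\mu^{*}$ satisfies the flow-balance equations for the chain induced by the stationary randomized policy $\pi^{*}(a\mid x) := \mu^{*}(x,a)/\sum_{a'}\mu^{*}(x,a')$, and that the per-flow delivery rates read off from $\mu^{*}$ coincide with the limiting per-flow throughputs under $\pi$.

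\emph{Step 2 (marginalize out $M$).} Define the projected measure $\bar\mu(q,a) := \sum_m \mu^{*}(m,q,a)$ and the policy $\bar\pi(a\mid q) := \bar\mu(q,a)/\sum_{a'}\bar\mu(q,a')$. Because the generator chains evolve exogenously, under $\bar\pi$ the joint process $(M_t,Q_t)$ is still Markov with $M$-marginal equal to the unique invariant distribution $\nu$ of the generators, so that the long-run arrival rate of flow $n$ remains $r_n$. Checking that $\bar\mu$ is a stationary occupation measure for the $(M,Q)$-chain under $\bar\pi$ reduces to linearity of the balance equations under marginalization; one then reads off that $\bar\pi$, which depends only on the buffered packets and their deadlines, achieves the same throughput vector as $\pi^{*}$.

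\emph{Main obstacle.} Step 2 is the delicate step: a priori, a scheduler that observes $M_t$ could anticipate bursts in a way a $Q$-only scheduler cannot, which would make marginalization lose throughput. The resolution is that long-run throughput is a linear functional of the joint occupation measure combined with the ergodic arrival rate $r_n$, and no coupling between $A_t$ and $M_t$ beyond what is already mediated through $Q_t$ is required for the long-run average. This is where the finiteness and irreducibility of the generator chains enter essentially, and it is the one step I would expect to write out in the greatest detail.
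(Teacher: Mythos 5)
Your Step 1 is the standard occupation-measure argument (Bolzano--Weierstrass on the averaged state--action frequencies, balance equations in the limit) and is fine modulo routine caveats about taking a subsequence along which all $N$ coordinate averages converge and about multichain behavior of the induced chain. The genuine gap is Step 2. Marginalizing the occupation measure does not commute with the dynamics here, because the transition law of $Q$ depends on the generator state: the arrivals entering $Q_{t+1}$ are a function of $M_{t+1}$, and under $\pi^{*}$ the action may be correlated with $M_t$ given $Q_t$. Stationarity of $\mu^{*}$ under $\pi^{*}$ reads
\begin{equation*}
\mu^{*}(m',q') \;=\; \sum_{m,q,a}\mu^{*}(m,q)\,\pi^{*}(a\mid m,q)\,P(m'\mid m)\,P(q'\mid q,a,m'),
\end{equation*}
and to conclude that $\bar\mu$ is stationary for the joint chain under $\bar\pi$ you would need to replace $\mu^{*}(m,q)\,\pi^{*}(a\mid m,q)$ by $\mu^{*}(m,q)\,\bar\pi(a\mid q)$ inside this sum; these differ precisely when $\pi^{*}$ genuinely uses $m$. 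The balance equations are indeed linear in the joint measure, but the map from policy to stationary measure is not, and the coupling between $m$ and $a$ hides in the product $\mu\cdot\pi$. Since the throughput vector is read off the stationary $(q,a)$-marginal, and the stationary law under $\bar\pi$ need not have $\bar\mu$ as its $(q,a)$-marginal, throughput can be lost. Your ``main obstacle'' paragraph correctly identifies the danger (an $M$-aware scheduler anticipating bursts), but the claimed resolution is a restatement of the desired conclusion, not an argument. The worry is real: if two generator states emit identical arrival counts but have different futures, then $Q_t$ --- even augmented with the full arrival history --- cannot reconstruct $M_t$, the problem is effectively a POMDP with observation $Q$, and memoryless $Q$-only stationary policies are not known to be sufficient in POMDPs; establishing the literal lemma would require a belief-state or filtering argument, not marginalization.

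For comparison: the paper offers no proof of this lemma at all --- it invokes it as a standard consequence of the finite-Markov modeling, and, tellingly, in the very next paragraph defines the system state to include not only the deadline matrix $\Psi_{n,\tau}(t)$ but also ``the state of the traffic pattern of each video.'' In other words, the version the paper actually uses downstream is stationarity in the \emph{full} state $(M_t,Q_t)$, which is exactly your Step 1 and nothing more; that much is the classical result that stationary randomized policies suffice for long-run average reward vectors in finite MDPs. Your Step 2 tries to prove the lemma's literal wording (decisions based solely on buffered packets and deadlines), which is strictly stronger than what the paper establishes or needs, and which your marginalization argument does not deliver. The sound fix is either to stop after Step 1 with the state taken to be $(M_t,Q_t)$, matching the paper's actual usage, or to add hypotheses under which $M_t$ is recoverable from the packet process (e.g., arrivals i.i.d., or emission counts distinguishing generator states), in which case the projection becomes legitimate.
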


Let $\Psi_{n,\tau}(t)$ be the number of packets for client $n$ that will expire in $\tau$ time slots at time $t$, i.e., the number of packets for client $n$ that have deadlines at $\tau+t$. We can represent the state of the system at time $t$ by the matrix containing $\Psi_{n,\tau}(t)$ for all $n$ and $1\leq \tau\leq \max_n\{\tau_n\}$, as well as the state of the traffic pattern of each video. Since the number of packets generated by a stream in a time slot is assumed to be bounded, $\Psi_{n,\tau}(t)$ is also bounded, and hence the number of possible system states is finite.

Next, we consider the transition of the system state. The states of the traffic patterns of videos evolve without being influenced by the scheduling policy of the AP, and only the matrix containing $\Psi_{n,\tau}(t)$ is influenced by the AP. Suppose that the AP schedules a packet of client $m$ that will expire in $\hat{\tau}_m$ time slots at time $t$. This transmission will be successful with probability $p_m$, in which case the packet leaves the system. All other packets continue to remain in the system at the next time slot unless they have expired, but their deadlines will be one time slot closer. Hence, $\Psi_{m,\hat{\tau}_m-1}(t+1) = \Psi_{m,\hat{\tau}_m}(t)-1$ with probability $p_m$, and $\Psi_{m,\hat{\tau}_m-1}(t+1) = \Psi_{m,\hat{\tau}_m}(t)$ with probability $1-p_m$ if $\hat{\tau}_m-1\geq 0$. For all $n\neq m$ or $\tau\neq \hat{\tau}_m$, $\Psi_{n,\tau-1}(t+1)=\Psi_{n,\tau}(t)$ if $0\leq \tau - 1\leq \tau_n-1$. Finally, $\Psi_{n,\tau_n}(t+1)$ is the number of packets generated for client $n$ at time $t+1$. In each time slot $t$, the AP obtains the vector of rewards $[e_n(t)]$, where $e_n(t)$ is the indicator function that a packet for client $n$ is delivered in time slot $t$. Thus, if the AP schedules $m$ in time slot $t$, we have $e_m(t)=1$ with probability $p_m$, and $e_n(t)=0$ for all $n\neq m$.

With the system being modeled as a controlled finite-state Markov chain, a stationary randomized policy $\eta$ obtains a vector of long-term average rewards equal to the vector $[\lim_{T\rightarrow\infty}\frac{\sum_{t=1}^Te_n(t)}{T}]$. We denote the throughput  $\lim_{T\rightarrow\infty}\frac{\sum_{t=1}^Te_n(t)}{T}$ of client $n$ under $\eta$ by $\hat{q}_n(\eta)$. We can then consider the infinite horizon optimization problem of $\min_\eta \sum_n(q_n-\hat{q}_n(\eta))^+$, where $x^+:=\max\{x,0\}$. The vector of throughputs is within the capacity region if and only if $\min_\eta \sum_n(q_n-\hat{q}_n(\eta))^+=0$. The policy $\arg\min_\eta \sum_n(q_n-\hat{q}_n(\eta))^+$ then supports the vector of throughputs.

\section{The Homogeneous Case: Characterization of Capacity and Optimality of EDF}

Although the above procedure characterizes the capacity region, it involves solving the infinite horizon optimization problem of a Markov chain that has at least $B^{\sum_n\tau_n}$ states, where $B$ is the maximum number of packets that a stream can generate in a time slot. Hence, it is usually intractable. Before proceeding further, we consider a special homogeneous case that turns out to have an explicit characterization. We assume that time slots are grouped into \emph{intervals}, where the $k$-th interval consists of $T$ consecutive time slots in $(kT, (k+1)T]$. At the beginning of each interval, that is, at time slots $1,T+1,2T+1,\dots$, each stream generates a packet. We will assume the homogeneous case where all streams have the same delay bounds, which is an integral multiple of the interval length, i.e., $\tau_n\equiv KT$, for some positive integer $K$. The special case $K=1$ corresponds to the scenario where each client plays its video stream immediately without buffering, and its capacity region has been characterized in \cite{IH09}. For this homogeneous case, we also assume that all the clients have the same channel reliability and the same throughput requirement, i.e., $p_{n}\equiv p$, and $q_{n}\equiv q$, for all $n$. We will determine the maximum $q$ such that the vector $[q_n|q_n\equiv q]$ is within the capacity region. The above assumptions are made only for an explicit characterization of the capacity region in this section, and are not needed for other sections.

Denote by $w_n :=\frac{q_n}{p_n}$ the implied workload of client $n$. Then, as noted in \cite{IH09}, the throughput of client $n$ is at least $q_{n}$ packets per time slot if and only if the AP, on average, schedules client $n$ for $w_{n}$ times per time slot. Assume that there are $\zeta$ available packets in an interval. The total number of time slots needed to deliver all $\zeta$ packets is the sum of $\zeta$ independent geometric random variables, each with mean $1/p$, and can be represented by $\Gamma(\zeta,p)$. In the event that $\Gamma(\zeta;p)<T$, there are $T-\Gamma(\zeta,p)$ time slots that cannot be utilized in the interval, and are hence forced to idle.

The average number of idle time slots may be different from policy to policy, due to the buffering employed, which strongly distinguishes the problem considered from the model analyzed in \cite{IH09}. Let $I^{\eta}$ denote the average number of idle time slots under the policy $\eta$ per interval. It is obvious that the vector $[q_n|q_n\equiv q]$ is within the capacity region only if $\sum_nw_n=N\frac{q}{p}\leq 1-\min_{\eta}I^{\eta}/T$, that is, the sum of work loads for all clients cannot exceed $1-\min_{\eta}I^{\eta}/T$, which is the maximum possible fraction of non-idle time slots. We will next discuss how to derive the value $\min_{\eta}I^{\eta}$, and show that this condition is also sufficient.

Let us begin by considering the following \emph{Earliest Deadline First} (EDF) policy. At the beginning of each interval, all the packets in the system are sorted in ascending order of the number of time slots remaining before the packet expires, with ties broken uniformly at random. The EDF policy schedules packets according to this ordering, where a packet is scheduled only after all packets that are higher in the ordering. This policy has been well studied for real-time systems \cite{CLL73}, and several optimality properties have been established. However, unlike in the real-time systems, where it is assumed that the maximum number of time slots needed to deliver a packet is known and bounded (and in fact is considered to be known when obtaining sharp results), in our context, the number of time slots needed to deliver a packet is a random, specifically geometric, random variable with mean $1/p$, and is not bounded. Hence, results in real-time system literature cannot be applied to live video streaming over wireless links.

We now show how to determine the performance of the EDF policy. At the beginning of an interval, let $\zeta(k)$ be the number of packets that will expire in $k$ intervals, i.e. in $k T$ time slots. We represent the state of the system by $\zeta:=[\zeta(1), \zeta(2), \dots, \zeta(K)]$. Suppose that in this interval, a total of $j$ transmissions are successful, which occurs with probability ${T\choose j}p^j(1-p)^{T-j}$. By the design of the EDF policy, packets that expire in $k$ intervals will be scheduled only when all packets that have closer deadlines are delivered, i.e., only when $j\geq\sum_{i=1}^{k-1}\zeta(i)$. At the beginning of the next interval, the number of packets that will expire in $k$ intervals is then $\{\zeta(k+1)-[j-\sum_{i=1}^k\zeta(i)]^+\}^+$, for all $k< K$. The number of packets that expire in $K$ intervals, which is the number of the newly generated packets at the beginning for the interval, is $N$, and so we have $\zeta(K)=N$. When the state of the system is $\zeta$ in an interval, the expected number of idle time slots in the interval is $E\{(T-\Gamma(\sum_{i=1}^{K}\zeta(i);p))^+\}$. Therefore, we can model the system state $\zeta$ as a Markov chain. As $\zeta(k)\leq N$ for all $k$, the number of states in this Markov chain is finite. Using standard techniques for Markov chains, we can solve for its steady state distribution, and thus derive the average number of idle time slots under the EDF policy, which we denote by $I^{E}$. By symmetry, all clients have the same throughput under the EDF policy, and hence the throughput of a client can be written as $\hat{q}^{E}:=\frac{p(T-I^{E})}{NT}$.

Now, we show that $I^{E}$ is $\min_\eta I^{\eta}$.
\begin{theorem}	\label{theorem:feasibility:EDF}
The EDF policy minimizes $I^{\eta}$.
\end{theorem}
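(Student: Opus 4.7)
The plan is to prove the theorem by a sample-path coupling argument. I would couple EDF with any competing work-conserving policy $\eta$ by giving them a common per-slot channel sequence $B_1,\dots,B_T\sim\mathrm{Bernoulli}(p)$ in each interval: whenever a policy attempts a transmission in slot $t$, it succeeds iff $B_t=1$. A policy that enters an interval with $M$ packets then produces $\min(M,\sum_{t=1}^T B_t)$ successful deliveries and $(T-\Gamma(M,p))^+$ idle slots, where $\Gamma(M,p)$ is the hitting time of the $M$-th success. Since idle depends on the state only through $M$ and is decreasing in $M$, the theorem reduces to showing $M_k^{EDF}\ge M_k^{\eta}$ pathwise at the start of every interval $k$.

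Inducting on $M$ alone fails: two policies with equal $M$ may disagree on the distribution across deadline classes, and that distribution governs the current interval's expirations and hence the leftover backlog. I would therefore induct on the vector-valued invariant
\[
S_k^{EDF}(i)\ \ge\ S_k^{\eta}(i)\qquad\text{for every } i\in\{1,\dots,K\},
\]
where $S_k^{\eta}(i):=\sum_{d\ge i}\zeta_k^{\eta}(d)$ is the number of packets at time $kT$ with at least $i$ intervals of life remaining. The case $i=1$ recovers the desired total-backlog comparison, while the cases $i>1$ encode the intuition that EDF, by clearing short-deadline packets first, preserves more long-deadline packets for the future.

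For the inductive step I would write $S_{k+1}^{\eta}(i)=S_k^{\eta}(i{+}1)-D_k^{\eta}(\ge i{+}1)+N\mathbf{1}_{i\le K}$, where $D_k^{\eta}(\ge i{+}1)$ counts packets delivered in interval $k$ whose life at its start was at least $i{+}1$. Because EDF serves in strict deadline order, it attains the pointwise minimum
\[
D_k^{EDF}(\ge i{+}1)=\bigl(j_E+S_k^{EDF}(i{+}1)-M_k^{EDF}\bigr)^+,
\]
while any $\eta$ only satisfies $D_k^{\eta}(\ge i{+}1)\ge\bigl(j_\eta+S_k^{\eta}(i{+}1)-M_k^{\eta}\bigr)^+$, since $\eta$ cannot deliver more than $M_k^{\eta}-S_k^{\eta}(i{+}1)$ packets of life $\le i$. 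Combining these bounds with the induction hypothesis and the coupling identity $j_E-j_\eta\le M_k^{EDF}-M_k^{\eta}$ (which follows from $M-j=(M-\sum_t B_t)^+$) closes the step after a short case split on the signs of the two plus-arguments. Specializing the resulting invariant to $i=1$ and taking expectations then yields $I^{EDF}\le I^{\eta}$.

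The main obstacle is really the choice of invariant. A straight total-packet induction breaks down exactly in configurations where $\zeta_k^{EDF}(1)>\zeta_k^{\eta}(1)$, because EDF can then lose strictly more to class-$1$ expirations in the current interval than $\eta$ does, and the total-packet ledger no longer balances against $M_k^{EDF}-M_k^{\eta}$. The suffix-sum strengthening is in essence the weakest refinement that restores the inductive bookkeeping while still implying the theorem.
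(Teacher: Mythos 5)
Your proposal is correct and takes essentially the same route as the paper: the paper also couples all work-conserving policies through a common sequence of channel successes per interval, proves by induction that EDF pointwise dominates any such policy $\eta$ in every suffix sum $\sum_{i=k}^{K}\zeta_l(i)$ of the deadline-class vector, and then concludes via the monotone decrease of the expected idle time $E\{(T-\Gamma(\sum_i\zeta(i);p))^+\}$ in the total backlog. Your per-slot Bernoulli coupling and explicit delivery-count bounds $D_k(\ge i{+}1)$ are just a more carefully justified rendering of the paper's nested-plus update recursion and its implicit monotonicity argument.
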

\begin{proof}
It suffices to show that $I^{E}\leq I^{\eta}$ for any work-conserving policy $\eta$, i.e., a policy that does not idle whenever there is a packet available for transmission. When the state of an interval is $[\zeta(1),\zeta(2),\dots]$, the expected number of idle time slots in the interval is $E\{(T-\Gamma(\sum_{i=1}^{K}\zeta(i);p))^+\}$, which is a decreasing function of $\sum_{i=1}^{K}\zeta(i)$. Hence we can prove that $I^{E}\leq I^{\eta}$ by showing that the EDF policy maximizes $\sum_{i=1}^{K}\zeta(i)$ in every interval, over all work-conserving policies.

To be more specific, we prove the following claim: Let $j(l)$ be the number of successful transmissions in the $l$-th interval. Denote the state of the system at the beginning of the $l$-th interval by $\zeta_l=[\zeta_l(1),\zeta_l(2),\dots]$. Fix $j(1),j(2),\dots$. Then, at the beginning of each interval $l$, the value of $\sum_{i=k}^K\zeta_l(i)$ under the EDF policy is larger than that under $\eta$, for all $k$.

Let the state of the system at the beginning of the $l$-th interval under the EDF policy, and $\eta$, be $\zeta_l^E=[\zeta_l^E(1),\zeta_l^E(2),\dots]$, and $\zeta_l^\eta=[\zeta_l^\eta(1),\zeta_l^\eta(2),\dots]$, respectively. We prove the claim by induction on $l$. When $l=1$, i.e., at the beginning of the first interval, both policies have not scheduled any transmissions yet. Hence, the claim holds.

Assume that the claim holds for the $l$-th interval. We now show that it also holds for the $(l+1)$-th interval. For each $k$, we have
\begin{align*}
&\sum_{i=k}^K\zeta_{l+1}^\eta(i)\\
\leq& N+\{\sum_{i=k}^{K-1}\zeta_{l}^\eta(i+1)-[j(l)-\sum_{i=1}^{k}\zeta_{l}^\eta(i)]^+\}^+\\
\leq& N+\{\sum_{i=k}^{K-1}\zeta_{l}^E(i+1)-[j(l)-\sum_{i=1}^{k}\zeta_{l}^E(i)]^+\}^+\\
&\mbox{(by induction hypothesis)}\\
=&\sum_{i=k}^K\zeta_{l+1}^E(i).
\end{align*}
Hence, by induction, we have established that $\sum_{i=k}^K\zeta_{l}^\eta(i)\leq \sum_{i=k}^K\zeta_{l}^E(i)$, for all $k$ and $l$.
\end{proof}

\begin{theorem}\label{t1}
For the homogeneous case, the maximum $q$ such that vector of throughputs $[q_n|q_n\equiv q]$ is within the capacity region is $q=\frac{p(T-I^E)}{NT}$.
\end{theorem}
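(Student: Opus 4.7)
The plan is to match an upper bound derived from workload considerations with an achievability result for EDF itself. The upper bound relies on Theorem~\ref{theorem:feasibility:EDF}, which already asserts that no policy can reduce the average idle time below $I^E$. The lower bound exploits the full symmetry of the homogeneous case to argue that every client receives the same throughput under EDF, and that this common value is $p(T-I^E)/(NT)$.

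For the upper bound, I would formalise the argument sketched in the text before Theorem~\ref{theorem:feasibility:EDF}. By Lemma~\ref{lemma:feasibility:Markov} it suffices to consider stationary randomized policies $\eta$. Under such an $\eta$, let $s_n(\eta)$ denote the long-run fraction of time slots in which client $n$ is scheduled; since the channel is independent Bernoulli$(p)$ across slots, $\hat{q}_n(\eta)=p\,s_n(\eta)$. Supporting the throughput vector $[q_n|q_n\equiv q]$ therefore forces $s_n(\eta)\geq q/p$ for every $n$. Summing over $n$ and using $\sum_n s_n(\eta)=1-I^\eta/T$ (the fraction of non-idle slots in steady state), I obtain $Nq/p\leq 1-I^\eta/T$. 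Since Theorem~\ref{theorem:feasibility:EDF} gives $I^\eta\geq I^E$, I conclude $q\leq p(T-I^E)/(NT)$.

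For the matching achievability I would simply invoke the EDF policy itself. Because the $N$ clients have identical delay bounds $KT$, identical channel reliabilities $p$, identical arrival patterns (one packet each at the start of every interval), and EDF breaks ties uniformly at random, the induced Markov chain is invariant under every permutation of client labels; its stationary distribution is therefore symmetric across clients, so all per-client throughputs are equal under EDF. The aggregate throughput per interval equals the expected number of successful transmissions, namely $p(T-I^E)$, and dividing by $N$ clients and by $T$ slots gives $\hat{q}^E=p(T-I^E)/(NT)$, matching the upper bound. The main step I expect to require care is the justification of the two ergodic identities $\hat{q}_n(\eta)=p\,s_n(\eta)$ and $\sum_n s_n(\eta)=1-I^\eta/T$; both rest on convergence of time averages to stationary expectations for the controlled finite-state chain introduced above Lemma~\ref{lemma:feasibility:Markov}, but stating this cleanly under the given randomized stationary policy is the one real subtlety in an otherwise short proof.
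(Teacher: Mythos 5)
Your proposal is correct and takes essentially the same route as the paper: necessity via the workload bound $Nq/p \leq 1 - I^{\eta}/T$ combined with Theorem~\ref{theorem:feasibility:EDF}, and achievability by invoking EDF itself with the symmetry argument giving $\hat{q}^E = \frac{p(T-I^E)}{NT}$. The ergodic identities you single out as the delicate step are precisely the facts the paper asserts (without further elaboration) in the text preceding the theorem, so your write-up is just a more explicit version of the paper's two-line proof.
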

\begin{proof}
We have shown that $I^E=\min_{\eta}I^\eta$. Hence, it is necessary that $Nq/p\leq 1 - I^E/T$ for the vector of throughputs $[q_n|q_n\equiv q]$ to be in the capacity region. On the other hand, since the EDF policy achieves a throughput of $\hat{q}^E=\frac{p(T-I^{E})}{NT}$, $\hat{q}^E$ is indeed the maximum $q$ such that $[q_n|q_n\equiv q]$ is within the capacity region.
\end{proof}


\section{The General Case: Earliest Positive-Debt Deadline Policy and Its Optimality}   \label{section:scheduling}

In this section we return to the non-homogeneous case where the $p_n$'s and $q_n$'s are not the same for all clients.  
We will establish the optimality of a simple online scheduling policy called the Earliest Positive-Debt Deadline First (EPDF). We will show that it
can support every vector of throughputs that
is strictly in the capacity region.  

The policy is based on the concept of \emph{truncated time debt}, which differs from the debt in \cite{IH09}. The choice of $M$ below is related to the usage of a multistep negative drift of the Lyapunov function.

\begin{definition}  \label{definition:scheduling:debt}
The \emph{truncated time debt} of client $n$ at time $t$, denoted by $d_n(t)$, is defined recursively by:
\begin{align*}
d_n(t+1) =& [d_n(t)+Mw_n1( t+1\equiv 1\pmod{M})\\
&-1(\mbox{the AP schedules $n$ in time t+1})]^+,
\end{align*}
where $1(\cdot)$ is the indicator function.
\end{definition}

Note that it increases by $Mw_n$ every $M$ time slots, where $M$ is an adjustable parameter, and decreases by 1 in each time slot that the AP schedules client $n$ for transmission. Finally, we only retain the positive part. We will hereafter say that the $M$ time slots form a \emph{frame}. The choice of $M$ will be discussed in the sequel. It will be the single parameter that will be tuned by the policy to adapt to all the above uncertainties.

We now present the scheduling policy, which we call the \emph{Earliest Positive-Debt Deadline First} (EPDF) policy: In each time slot, the AP schedules the packet with the earliest deadline from those whose associated clients have strictly positive truncated time debts, that is, $d_n(t)>0$. Ties are broken arbitrarily. If the associated client of every packet in the system has $d_n(t)=0$, the AP schedules the packet with the earliest deadline. The AP only idles in a time slot when there are no packets to be transmitted. We note that this policy differs from the EDF policy in that it restricts the AP from serving clients with zero truncated time debts, so as to prevent the AP from providing too much service to some clients while starving others.

The EPDF policy only needs to be tuned with an appropriate choice of $M$. It does not need any explicit information on the traffic pattern of each client besides the knowledge of the implied workload and the choice of $M$. It is a simple online scheduling policy that is readily implementable. We now show that this policy supports every vector of throughputs that is strictly in the capacity region through an appropriate choice of $M$. 

Our proof employs the Lyapunov function $L(k):=\sum_{n=1}^Nd_n(kM)$. In our context the Foster-Lyapunov Theorem is used as follows:

\begin{theorem}\label{theorem:scheduling:lyapunov}
If, under some scheduling policy $\eta$, there exists a positive number $\delta$, and a finite subset $\mathcal{D}_0$ of $\mathbb{R}^N$ such that:
\begin{align}
E\{L(k+1)-L(k)|[d_n(kM)]\} \leq -\delta,&\mbox{ if $[d_n(kM)]\notin \mathcal{D}_0$},\\
E\{L(k+1)-L(k)|[d_n(kM)]\}<\infty, &\mbox{ if $[d_n(kM)]\in \mathcal{D}_0$},
\end{align}
then the throughput of each client $n$ is at least $q_n$.
\end{theorem}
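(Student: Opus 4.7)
The plan is to apply Foster's positive-recurrence criterion to the joint Markov chain
$X_k := ([d_n(kM)], \text{traffic-pattern state at }kM)$,
and then translate stability of the debt process into an almost-sure throughput bound. The joint state space is countable (the traffic chain has finitely many states by assumption, and each $d_n$ takes values in a countable set once $w_n$ is fixed); the hypotheses of Theorem~\ref{theorem:scheduling:lyapunov} are exactly Foster's criterion with test function $L(k)=\sum_n d_n(kM)$; and EPDF together with the standing irreducibility of the traffic chain makes the joint chain irreducible on a class containing the all-zero-debt state $x_0$. Foster's criterion then delivers positive recurrence of $X_k$, with finite mean return time to $x_0$.

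The key step is to upgrade positive recurrence to $d_n(T)/T \to 0$ a.s. Let $\tau_1<\tau_2<\cdots$ denote the frames at which $X_k$ returns to $x_0$. By Kac's formula the inter-regeneration times are i.i.d.\ with finite mean $m$, so the SLLN gives $\tau_k/k\to m<\infty$ a.s., and hence $\tau_{k+1}/\tau_k\to 1$ a.s. Between consecutive regenerations each $d_n$ can grow by at most $Mw_n$ per frame and is non-negative, so for any $T\in[\tau_k M,\tau_{k+1}M)$,
\[
\frac{d_n(T)}{T}\;\leq\;\frac{Mw_n(\tau_{k+1}-\tau_k+1)}{\tau_k M}\;=\;w_n\!\left(\frac{\tau_{k+1}+1}{\tau_k}-1\right)\;\longrightarrow\;0\quad\text{a.s.}
\]

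The remainder is a standard queueing identity. Summing the inequality $d_n(t+1)\geq d_n(t)+Mw_n\mathbf{1}(t+1\equiv 1\pmod M)-\mathbf{1}(\text{AP schedules }n\text{ at }t+1)$, which follows from $[x]^+\geq x$, over $t=0,\dots,T-1$ yields
\[
S_n(T)\;\geq\;d_n(0)+Mw_n\lfloor T/M\rfloor-d_n(T),
\]
where $S_n(T)$ counts scheduling attempts for client $n$ in $[1,T]$. Dividing by $T$ and invoking $d_n(T)/T\to 0$ gives $\liminf_T S_n(T)/T\geq w_n$ a.s.\ (the case $w_n=0$ being trivial). Since each attempted transmission succeeds independently with probability $p_n$, the SLLN applied conditionally on the scheduling sequence (which diverges a.s.) gives $\sum_{t=1}^T e_n(t)/T\to p_n\cdot\lim S_n(T)/T$, whence $\liminf_T \sum_{t=1}^T e_n(t)/T\geq p_nw_n=q_n$. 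The main obstacle is the middle step: Foster's criterion only furnishes positive recurrence, which by itself does not bound $E_\pi[L]$, and one must exploit the regenerative structure at $x_0$ together with the bounded per-frame increments of $d_n$ to promote it to the sample-path bound $d_n(T)/T\to 0$ that drives the throughput argument.
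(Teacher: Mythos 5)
The paper itself gives no proof of Theorem~\ref{theorem:scheduling:lyapunov}: it is stated as an instantiation of the standard Foster--Lyapunov theorem, so your proposal must be judged on its own merits rather than against an in-paper argument. Your overall architecture is the right one and matches how the theorem is meant to be used: negative drift implies stability of the debts, stability gives $d_n(T)/T\to 0$, telescoping the recursion via $[x]^+\geq x$ gives a scheduling rate of at least $w_n$, and the channel converts attempts into deliveries at rate $p_n$, yielding throughput at least $p_nw_n=q_n$. The endgame has one technical slip: you cannot apply the SLLN ``conditionally on the scheduling sequence,'' because $u_n(t)$ is adapted to past channel outcomes (under EPDF the schedule reacts to successes and failures), so the scheduling sequence is not a fixed environment. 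The correct tool is the martingale SLLN applied to $\sum_{t\leq T}\bigl(e_n(t)-p_nu_n(t)\bigr)$, whose increments are bounded martingale differences; this also lets you avoid presuming that $S_n(T)/T$ converges, which your proposal tacitly does when writing $\sum_t e_n(t)/T\to p_n\cdot\lim S_n(T)/T$.

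The genuine gap is in the middle step. First, $X_k=([d_n(kM)],\mbox{traffic state})$ is not a Markov chain: the debt transitions depend on which packets the policy schedules, and under EPDF (or any deadline-aware $\eta$) that depends on the deadlines of buffered packets, i.e., on the backlog matrix $[\Psi_{n,\tau}]$, which you must adjoin to the state (harmless, since it is finite by the boundedness of arrivals, but necessary). Second, and more seriously, irreducibility of the joint chain and reachability of the all-zero-debt atom $x_0$ are asserted rather than proved, and neither follows from the hypotheses: the theorem is stated for an arbitrary policy $\eta$ --- the paper applies it both to EPDF and to the policy furnished by Lemma~\ref{lemma:feasibility:Markov} --- and even in the Markov case the chain can have transient states and several closed classes, none of which need contain a zero-debt state. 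Your cycle argument can be salvaged in the Markov case: Foster gives a finite mean return time to the finite set of full states whose debt component lies in $\mathcal{D}_0$, positive recurrence of a finite set yields an accessible atom $x^*$ in whatever recurrent class the chain enters, and your Kac/regeneration bound goes through verbatim with $d_n(\tau_kM)=d_n^*$ fixed rather than zero. But the robust route --- which also covers non-Markov $\eta$ and matches the paper's reading of the drift condition as holding regardless of events prior to slot $kM$ --- dispenses with regeneration entirely: the per-frame increments of $L$ are deterministically bounded by $M\sum_n(w_n+1)$, $L$ is bounded on the finite set $\mathcal{D}_0$, so Hajek's drift lemma gives exponential tails for excursions of $L$ above a fixed level; Borel--Cantelli then yields $L(k)=O(\log k)$ a.s., hence $d_n(T)/T\to 0$ a.s.\ with no irreducibility or Markov structure needed, after which your telescoping and martingale steps finish the proof.
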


\begin{theorem} \label{theorem:scheduling:optimal}
The EPDF policy supports every vector $[q_n]$ that is strictly in the capacity region, with properly chosen $M$.
\end{theorem}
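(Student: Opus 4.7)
The plan is to verify the two hypotheses of Theorem~\ref{theorem:scheduling:lyapunov} for the Lyapunov function $L(k):=\sum_n d_n(kM)$, treating as the underlying Markov state the vector of truncated debts augmented by the packet-availability matrix $\Psi_{n,\tau}$ and the traffic-pattern Markov states. First I would unwind the recursive definition of $d_n$ across one frame: $d_n$ jumps up by $Mw_n$ at the first slot, decreases by $1$ each slot EPDF schedules $n$, and is reflected at $0$. Because only the positive part is tracked, every scheduling done while $d_n=0$ is simply discarded, so the pointwise bound
\begin{equation*}
L(k+1)-L(k)\ \le\ M\sum_n w_n\ -\ \sum_n S_n^{+}(k)
\end{equation*}
holds, where $S_n^{+}(k)$ counts slots within frame $k$ in which EPDF serves $n$ while $d_n>0$.

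Next I would isolate the ``good'' region $\mathcal{G}_M:=\{\mathbf{d}:d_n\ge M\ \forall n\}$. Inside $\mathcal{G}_M$ every debt stays strictly positive for the entire frame (it can drop by at most $1$ per slot), so the EPDF restriction to positive-debt clients is never binding, and EPDF becomes fully work-conserving. Consequently $\sum_n S_n^{+}(k)=M-J_k$, with $J_k$ the number of slots in the frame during which no unexpired packet is present in the system. The upper bound on the drift in $\mathcal{G}_M$ is therefore $M\sum_n w_n - M + E[J_k\mid\mathrm{state}]$.

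The crux is controlling $E[J_k\mid\mathrm{state}]$. Because $[q_n]$ lies strictly in the capacity region, Lemma~\ref{lemma:feasibility:Markov} yields a stationary randomized policy $\eta^*$ with $\hat q_n(\eta^*)\ge q_n+\alpha$ for some $\alpha>0$, so the long-run scheduling rate of $\eta^*$ is at least $\sum_n w_n+\alpha\sum_n 1/p_n$. Coupling EPDF with $\eta^*$ on the same realization of arrivals and channel outcomes, I would argue by an interchange argument in the spirit of the proof of Theorem~\ref{theorem:feasibility:EDF} that, over a frame, EPDF produces at least as many non-idle slots as $\eta^*$ does, up to a discrepancy of order $\max_n\tau_n$ incurred at the frame boundaries and from Markov-chain mixing. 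This gives $E[M-J_k\mid\mathrm{state}\in\mathcal{G}_M]\ge M\bigl(\sum_n w_n+\alpha\sum_n 1/p_n\bigr)-C$ for a constant $C$ independent of $M$. Combining yields $E[L(k+1)-L(k)\mid\mathrm{state}\in\mathcal{G}_M]\ \le\ -\alpha M\sum_n 1/p_n+C$, which becomes $\le-\delta<0$ once $M$ is chosen large enough. Outside $\mathcal{G}_M$ the drift is uniformly bounded by $M\sum_n w_n$, and the complementary set is finite because the reachable values of $d_n$ within $\mathcal{G}_M^c$ are bounded by $M+M\max_n w_n$; this verifies the second hypothesis of Theorem~\ref{theorem:scheduling:lyapunov} with $\mathcal{D}_0:=\{\mathbf{d}:\min_n d_n<M\}$.

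The main obstacle I expect is the coupling step: EPDF and $\eta^*$ drive the packet-availability matrix $\Psi_{n,\tau}$ along different sample paths, so one must show that the earliest-positive-debt-deadline rule does not waste more packets to expiration than $\eta^*$ does over the frame. This is essentially the heterogeneous, positive-debt-restricted analogue of the monotonicity $\sum_{i=k}^K\zeta_l^\eta(i)\le\sum_{i=k}^K\zeta_l^E(i)$ proved for the homogeneous EDF case; adapting that exchange argument to distinct $p_n$'s and to the constraint that only positive-debt clients are eligible, while also absorbing the $O(1)$ mixing-time error into the constant $C$, is the technical heart of the proof.
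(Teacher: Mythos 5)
Your proposal contains a genuine structural gap in the drift argument. The exceptional set you propose, $\mathcal{D}_0=\{\mathbf{d}:\min_n d_n<M\}$, is not finite: one coordinate being below $M$ places no bound on the other coordinates, so your claim that the reachable debts in $\mathcal{G}_M^c$ are bounded by $M+M\max_n w_n$ is false (consider $d_1$ arbitrarily large with $d_2=0$). Hence Theorem~\ref{theorem:scheduling:lyapunov} cannot be invoked: you establish negative drift only on $\mathcal{G}_M=\{\mathbf{d}:d_n\geq M\ \forall n\}$ and merely bounded drift on its infinite complement. Worse, that complement is exactly where your aggregate accounting breaks down: when some client's debt is zero, the positive-debt restriction binds, services to zero-debt clients are absorbed by the truncation and do not decrease $L$, so $\sum_n S_n^+(k)$ can no longer be identified with the number of busy slots $M-J_k$, and counting idle slots tells you nothing about whether the one client with huge debt is being served. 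The paper avoids this by obtaining negative drift whenever \emph{any single} debt exceeds $M$ --- making the exceptional set $\{\mathbf{d}:d_n\leq M\ \forall n\}$ finite --- and this requires per-client control rather than aggregate idling control: under the comparison policy $\eta$, each client is served at rate at least $w_n+\alpha^*$ per slot (inequality (\ref{equation:schdueling:proof1})) and is served at least $Mw_n$ times per frame with high probability (inequality (\ref{equation:schdueling:proof2})), so the overloaded client contributes $-M\alpha^*$ to the drift while every other client's positive part contributes at most $Mw_n\cdot\alpha^*/(2\sum_n w_n)$.

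The step you flag as the technical heart --- coupling EPDF with $\eta^*$ and showing EPDF idles at most $O(1)$ more per frame --- is also not a routine adaptation of Theorem~\ref{theorem:feasibility:EDF}. That monotonicity proof depends on synchronized arrivals, a common reliability $p$, and the resulting policy-independence of the per-interval success count $j(l)$; with heterogeneous $p_n$ the channel randomness consumed in a slot depends on which client is scheduled, the two policies' backlogs $\Psi_{n,\tau}$ diverge, and no analogous pathwise dominance of $\sum_i\zeta(i)$ is available. The paper sidesteps idle-slot comparisons entirely: it fixes the per-packet geometric transmission counts $\gamma(i)$, which makes each packet's service requirement policy-independent, and then performs a within-frame exchange argument that transforms $\eta$'s schedule slot by slot into EPDF's schedule, case-splitting on whether the swapped packets' clients have positive debt, to show directly that the count of positive-debt services satisfies $\Delta^E\geq\Delta^\eta$. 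To repair your proof you would need both the corrected drift region (negative drift when any debt is large, with the per-client bounds above) and some such policy-independent randomness device in place of the unproven idling comparison.
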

\begin{proof}
Consider a vector of throughputs $[q_n]$ that is strictly in the capacity region. There exists some $\alpha >0$ such that the vector $[q_n+\alpha]$ is still in the capacity region. Therefore, we can assume that there exists a scheduling policy $\eta$, under which the throughput of each client $n$ is at least $q_n+\alpha$.

Our proof consists of two parts. First, we show that the condition in Theorem \ref{theorem:scheduling:lyapunov} is satisfied under $\eta$. We then show the value of $E\{L(k+1)-L(k)\}$ under the EPDF policy is smaller than that under $\eta$. Hence, the condition in Theorem \ref{theorem:scheduling:lyapunov} is also satisfied under the EPDF, and the throughput of each client $n$ is at least $q_n$ under the EPDF policy.

Since the throughput of each client $n$ is at least $q_n+\alpha$ under $\eta$, we have, under $\eta$, the long-term average number of time slots that the AP schedules client $n$ is at least $(q_n+\alpha)/p_n=w_n+\frac{\alpha}{p_n}$ per time slot. Let $\alpha^*:=\min_{n\in\{1,2,\dots,N\}}\frac{\alpha}{p_n}$. Let $u_n(t)$ be the indicator function that the AP schedules client $n$ in time slot $t$. For large enough $M$, we have
\begin{equation}    \label{equation:schdueling:proof1}
E\{\sum_{t=kM+1}^{(k+1)M}u_n(t)\}\geq Mw_n+M\alpha^*,
\end{equation}
and
\begin{equation}    \label{equation:schdueling:proof2}
Prob\{\sum_{t=kM+1}^{(k+1)M}u_n(t)\geq Mw_n\}>1-\frac{\alpha^*}{2\sum_{n=1}^Nw_n},
\end{equation}
for all $n$ and $k$, regardless of events prior to time slot $kM$. We choose one such $M$ so that $Mw_n$ is an integer for each $n$.

We consider the Lyanpunov drift, $E\{L(k+1)-L(k)\}$, under $\eta$. By definition, $d_n((k+1)M)=[d_n(kM)+Mw_n-\sum_{t=kM+1}^{(k+1)M}u_n(t)]^+$. Since $\sum_{t=kM+1}^{(k+1)M}u_n(t)\leq M$, we have
\begin{equation}
d_n((k+1)M)-d_n(kM)=Mw_n-\sum_{t=kM+1}^{(k+1)M}u_n(t),
\end{equation}
if $d_n(kM)>M$,and
\begin{equation}
d_n((k+1)M)-d_n(kM)\leq [Mw_n-\sum_{t=kM+1}^{(k+1)M}u_n(t)]^+,
\end{equation}
otherwise. If $d_n(kM)\leq M$, for all $n$, we have, under $\eta$,
\begin{align*}
&E\{L(k+1)-L(k)\}\\
=&\sum_{n=1}^NE\{d_n((k+1)M)-d_n(kM)\}\\
\leq& \sum_{n=1}^NMW_n<\infty.
\end{align*}
On the other hand, if there exists a client $n_0$ such that $d_{n_0}(kM)>M$, we have, under $\eta$,
\begin{align*}
&E\{L(k+1)-L(k)\}\\
=&\sum_{n=1}^NE\{d_n((k+1)M)-d_n(kM)\}\\
\leq& [Mw_{n_0}-E\{\sum_{t=kM+1}^{(k+1)M}u_{n_0}(t)\}]\\
&+\sum_{n\neq n_0}E\{[Mw_n-\sum_{t=kM+1}^{(k+1)M}u_n(t)]^+\}\\
\leq& -M\alpha^*+\sum_{n\neq n_0}Mw_nProb\{\sum_{t=kM+1}^{(k+1)M}u_n(t)<Mw_n\}\\
\leq& -M\alpha^*+\sum_{n\neq n_0}Mw_n\frac{\alpha^*}{2\sum_{n=1}^Nw_n}\hspace{10pt}\mbox{(by (\ref{equation:schdueling:proof2}))}\\
\leq& -\frac{M\alpha^*}{2}.
\end{align*}
Thus, by letting $\delta=\frac{M\alpha^*}{2}$, and $\mathcal{D}_0$ be the set of states where $d_n(t)\leq kM$, for all $n$, $\eta$ satisfies the condition in Theorem \ref{theorem:scheduling:lyapunov}.

Next, we show that the value of $E\{L(k+1)-L(k)\}$ under the EPDF policy is smaller than that under $\eta$. We number the packets that can be transmitted some time between time slot $kM+1$ and time slot $(k+1)M$ by $i=1,2,3,\dots$. Packet $i$ may either be one that is generated during time slots $(kM, (k+1)M]$, or be one that is generated before time slot $kM+1$ and is not delivered by time slot $kM+1$. We denote by $n(i)$ the client associated with packet $i$, and by $t(i)$ the time slot that packet $i$ is generated. The deadline of packet $i$ can then be expressed as $t(i)+\tau_{n(i)}-1$. Finally, we let $\gamma(i)$ be the number of transmissions that need to be scheduled for packet $i$ during time slots $(kM, (k+1)M]$ before it can be successfully delivered. Therefore, the packet $i$ can be scheduled at most $\gamma(i)$ times. Since the channel reliability for client $n(i)$ is $p_{n(i)}$, $\gamma(i)$ is a geometric random variable with mean $1/p_{n(i)}$. We will show that the value of $L(k+1)-L(k)$ under the EPDF policy is no larger than that under $\eta$, for all $\gamma(1),\gamma(2),\dots$.

Let $\hat{u}(t)$ be the indicator function that the packet scheduled at time slot $t$, say, packet $i$, has $d_{n(i)}(t)>0$. Since we choose $M$ so that $Mw_n$ is an integer for each $n$, $d_n(t)$ is an integer for each $n$ and $t$. Hence, $L(k+1)-L(k)=\sum_{n=1}^NMw_n-\sum_{t=kM+1}^{(k+1)M}\hat{u}(t)$. Given $\gamma(1),\gamma(2),\dots$, let $\Delta^\eta$ and $\Delta^E$ be the values of $\sum_{t=kM+1}^{(k+1)M}\hat{u}(t)$ under $\eta$ and the EPDF policy, respectively. We can show that the value of $L(k+1)-L(k)$ under the EPDF policy is no larger than that under $\eta$ by showing $\Delta^E\geq \Delta^\eta$.

We modify the schedule under $\eta$ time slot by time slot from time slot $kM+1$ to time slot $(k+1)M$, so that it eventually becomes the same as that under EPDF. We show that in each step of the modification, the value of $\Delta^\eta$ does not decrease. Assume that we have modified all time slots within $(kM,t)$ so that the two policies have the same schedule. We now consider time slot $t$. If $\eta$ and EPDF schedules the same packet in time slot $t$, we do not need to make any modifications. Suppose $\eta$ schedules a packet $i^\eta$ and EPDF schedules another packet $i^E$ at time slot $t$, where we set $i^\eta=0$, or $i^E=0$, if $\eta$, or EPDF, does not schedule any packet and idles at time slot $t$, respectively. We modify the schedule of $\eta$ according to the following different cases. First, consider the case when $d_{n(i^\eta)}(t)>0$ and $d_{n(i^E)}(t)>0$, and therefore $\hat{u}(t)=1$ under either policy. If $\eta$ does not schedule packet $i^E$ during time slots $[t,(k+1)M]$, we can simply modify $\eta$ so that it schedules packet $i^E$ at time slot $t$ without changing its schedule in the following time slots. On the other hand, if $\eta$ schedules packet $i^E$ at some time slots after $t$, we pick one of these time slots, denoted by $t'$, and modify the schedule of $\eta$ so that it schedules $i^E$ at time slot $t$, and schedules $t^\eta$ at time slot $t'$. By the design of EPDF, the deadline of $i^E$ must be no larger than that of $i^\eta$. Therefore, $i^\eta$ can be scheduled at time slot $t'$ without violating its deadline constraint. Under this case, the value of $\Delta^\eta$ is not influenced by the modification. Next, consider the cases when $d_{n(i^\eta)}(t)=0$ and $d_{n(i^E)}(t)>0$, and when $d_{n(i^\eta)}(t)=0$ and $d_{n(i^E)}(t)=0$. If $\eta$ does not schedule packet $i^E$ during time slots $[t,(k+1)M]$, we can simply modify $\eta$ so that it schedules packet $i^E$ at time slot $t$ without changing its schedule in the following time slots. On the other hand, if $\eta$ schedules packet $i^E$ at some time slots after $t$, we pick one of these time slots, denoted by $t'$, and modify the schedule of $\eta$ so that it schedules $i^E$ at time slot $t$, and idles at time slot $t'$. Since, under $\eta$, $\hat{u}(t)=\hat{u}(t')=0$ before the modification, this modification does not decrease the value of $\Delta^\eta$. Finally, we note that it is impossible that $d_{n(i^\eta)}(t)>0$ and $d_{n(i^E)}(t)=0$, as the EPDF policy always schedules a packet whose client has strictly positive truncated time debt as long as there is one. In summary, we can use the above procedure to modify the schedule under $\eta$ so that it eventually becomes the same as that under EPDF without decreasing the value of $\Delta^\eta$ in each step of the modification. Hence, we have shown that $\Delta^\eta\leq\Delta^E$.

We have shown that the value of $E\{L(k+1)-L(k)\}$ under EPDF is smaller than that under $\eta$. Therefore, EPDF also satisfies the condition in Theorem \ref{theorem:scheduling:lyapunov}, as long as the vector $[q_n]$ is strictly in the capacity region. EPDF thus supports every vector of throughputs that is strictly in the capacity region.
\end{proof}

In the proof of Theorem \ref{theorem:scheduling:optimal}, the number of time slots in a frame, $M$, can be large, which means the truncated time debt of each client is updated infrequently. While the EPDF policy guarantees that the long-term average throughput of each client is at least as large as its requirement, a large $M$ may lead to undesirable short-term performance. The following example depicts one such scenario:

\begin{example}
Consider a system with two streams. Each stream generates one packet in each time slot, and requires a delay bound of $\tau_n=1$. Further, assume that $w_1=0.5$ and $w_2=0$, that is, client 2 does not require any of its packets to be delivered. Suppose that the EPDF policy chooses frame size $M=100$. Now, at the beginning of the first time slot, we have $d_1(1)=Mw_1=50$ and $d_2(1)=0$. By the EPDF policy, the AP schedules transmissions for client 1 for all time slots in $[1,50]$. After time slot 50, the truncated time debt for client 1 becomes 0, and the AP may schedule transmissions for client 2 for all time slots in $[51,100]$. To summarize, the EPDF may devote the first half of the frame to client 1, and then totally neglects it in the second half of the frame, which can result in poor short-term performance for client 1. On the other hand, if the EPDF policy chooses $M=2$, it will schedule client 1 in time slots $\{1,3,5,\dots\}$, and the short-term performance of client 1 will be much better than it is when $M=100$. $\Box$
\end{example}

To improve short-term performance, we may want to set $M$ to be a small number. However, the following example suggests that the EPDF policy may fail to support a vector of throughputs that is strictly in the capacity region when $M$ is too small.

\begin{example}	\label{example:smallM}
Consider a system with three clients. Client 1 generates a packet in every time slot, has channel reliability $p_1=1.0$, and requires that $\tau_1=1$, $q_1=0.5$, and hence $w_1=0.5$. Client 2 generates a packet in every time slot, has channel reliability $p_2=1.0$, and requires that $\tau_2=1$ and $q_2=0$. Finally, client 3 generates a packet in time slots of the form $4m+2$, $m=0,1,2,\dots$, i.e., in time slots $\{2,6,10,14,\dots\}$. Client 3 has channel reliability $p_3=0.5$, and requires that $\tau_3=2$, $q_3=3/16$, and hence $w_3=3/8=0.375$.

Suppose we set $M=4$. Then the EPDF policy schedules client 1 in time slots of the form $4m+1$ and $4m+2$, schedules client 3 in time slots of the form $4m+3$, and schedules client 3 in time slots of the form $4m+4$ if the transmission in time slot $4m+3$ fails, for all $m=0,1,2,\dots$. It is easy to check that the EPDF policy supports the required throughput to every client.

On the other hand, suppose we set $M=2$. At the beginning of time slot 1, we have $d_1(1)=1, d_2(1)=0, d_3(1)=0.75$. The EPDF policy may schedule client 1 in time slot 1, and client 2 in time slot 2. Note that the packet of client 3 has not been generated in time slot 2, and hence cannot be scheduled in the time slot. At the beginning of time slot 3, we have $d_1(3)=1-1+1=1$, $d_2(3)=0$, and $d_3(3) = 0.75+0.75=1.5$. The EPDF policy then schedules client 1 in time slot 3, as its deadline is earlier than that of client 3, and schedules client 3 in time slot 4. The same schedule then repeats for all the following time slots. Hence, under the EPDF policy, client 3 is scheduled once every 4 time slots, and has a throughput of $1/8$, which is less than its requirement. $\Box$
\end{example}

In the above example, the EPDF policy fails to provision the required throughputs when the frame size $M$ is too small. The reason is that when $M=2$, the packet generations differ greatly from frame to frame. In particular, client 3 only generates packets in half of the frames. In order to improve short-term performance without losing long-term performance, we choose $M$ so that the set of packets generated in each frame is similar.

\section{Simulations on Three Movies}  \label{section:simulations}

We now present simulation results based on real traces of video streaming. We use the traces provided by the Video Trace Library of the Arizona State University \cite{SeRe12,traces2}. We conduct our simulations on three different HD movies, namely, ``Harry Potter," ``Finding Neverland," and ``Transporter 2."

We have implemented the EPDF policy, as well as three other policies for performance comparison, in ns-2. The three other policies are the EDF policy, which is proposed in \cite{CLL73} and described in Section \ref{section:feasibility}, the Largest Debt First (LDF) policy, and a policy from Dua and Bambos \cite{AD07}. The LDF policy schedules transmissions for the client that has the largest truncated time debt in each time slot. It has been shown that the LDF policy achieves the capacity region for the special case when all streams generate packets at the same time, and none of the clients buffer incoming packets before playing the video \cite{IH09}\cite{IH09Hoc}. In \cite{AD07}, it is assumed that each packet drop due to expiration incurs a \emph{dropping cost}, where the dropping costs may be different from client to client. A scheduling policy is then proposed aiming to minimize the long-term average total dropping costs over all clients. This policy considers the dropping cost and deadline of packets, as well as the channel reliabilities of clients. In our simulations, we set the dropping cost of client $n$ to be $q_n/r_n$, as this is the proportion of packets that client $n$ requires to be delivered on time.

We consider IEEE 802.11a, which supports up to 54Mbps data rate, as the MAC protocol. Simulations show that the time needed to transmit a packet of size 1500 Bytes, including all overheads and the time for ACK, is about 750 $\mu s$. Hence, we set the length of a time slot to be 750 $\mu s$. When generating traffic from the traces, we divide a video frame into multiple packets if its size is larger than 1500 Bytes, and we also combine adjacent video frames into one packet when their total size is less than 1500 Bytes.

We consider a system with 30 clients arranged in a $6\times 5$ grid, where adjacent clients are separated by 170 meters, and one AP that is placed at the center of the grid. We use the \texttt{Shadowing} module in ns-2 to model the unreliable wireless links. Simulations show that under the above topology, the values of $p_n$ range from $51\%$ to $100\%$. In each simulation, the AP keeps a running estimate of the channel reliability for each client. When the AP transmits a packet for client $n$, it updates its estimate of $p_n$ by $p_n\leftarrow (1-\alpha)p_n+ \alpha\cdot 1(\mbox{the transmission succeeds})$, where $1(\cdot)$ is the indicator function.

Each simulation lasts 100 seconds. In each simulation, we assume that all clients are watching the same movie, but their starting times are staggered. To be more specific, we assume that when client $n$ is watching the $t$-th second of the movie, client $(n+1)$ is watching the $(t+100.11)$-th second of the same movie. This ensures that the traffic patterns are different from client to client. Unless otherwise specified, the delay bounds of clients, $\tau_n$, are assumed to be evenly distributed between 20000 -- 30000 time slots. 

To better illustrate the simulation results, we assume that half of the clients require a portion $X$ of their packets to be delivered on time, while the other half of the clients require a portion $Y$ of their packets to be delivered on time. The throughput requirements of clients are then computed accordingly. We define the \emph{achieved throughput region} of a policy as the region composed of all $(X,Y)$ that can be achieved by the policy. A pair $(X,Y)$ is considered to be achieved if the resulting throughput of each client is at least 95$\%$ of its required throughput.

\subsection{Performance Comparisons Between Different Policies}	\label{section:simulation:policy}

\begin{figure*}
\subfigure[Harry Potter]{
\label{fig:HarryPotter} 
\includegraphics[width=2in]{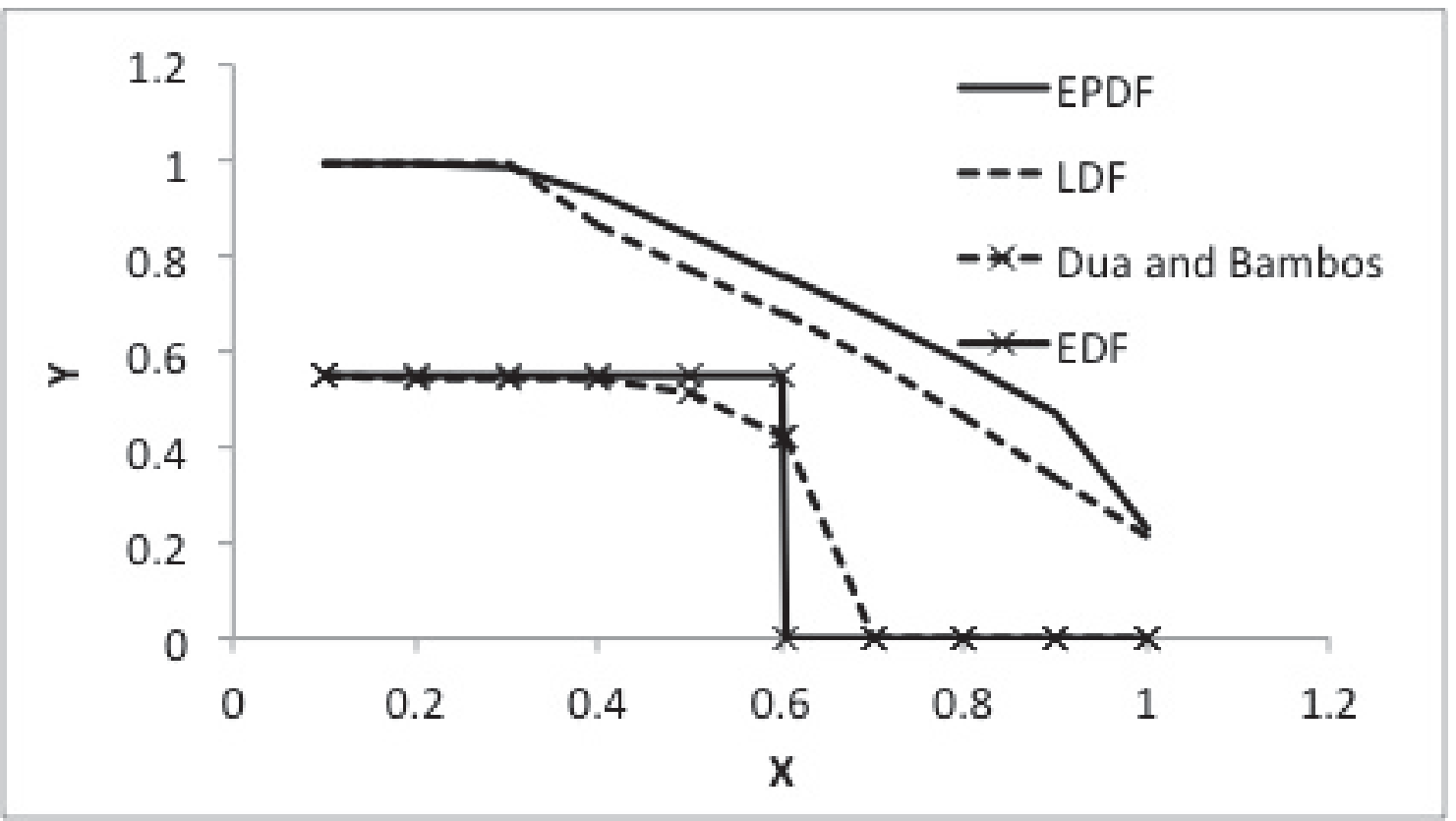}}
\hspace{0.01\linewidth} 
\subfigure[Finding Neverland]{
\label{fig:FindingNeverland} 
\includegraphics[width=2in]{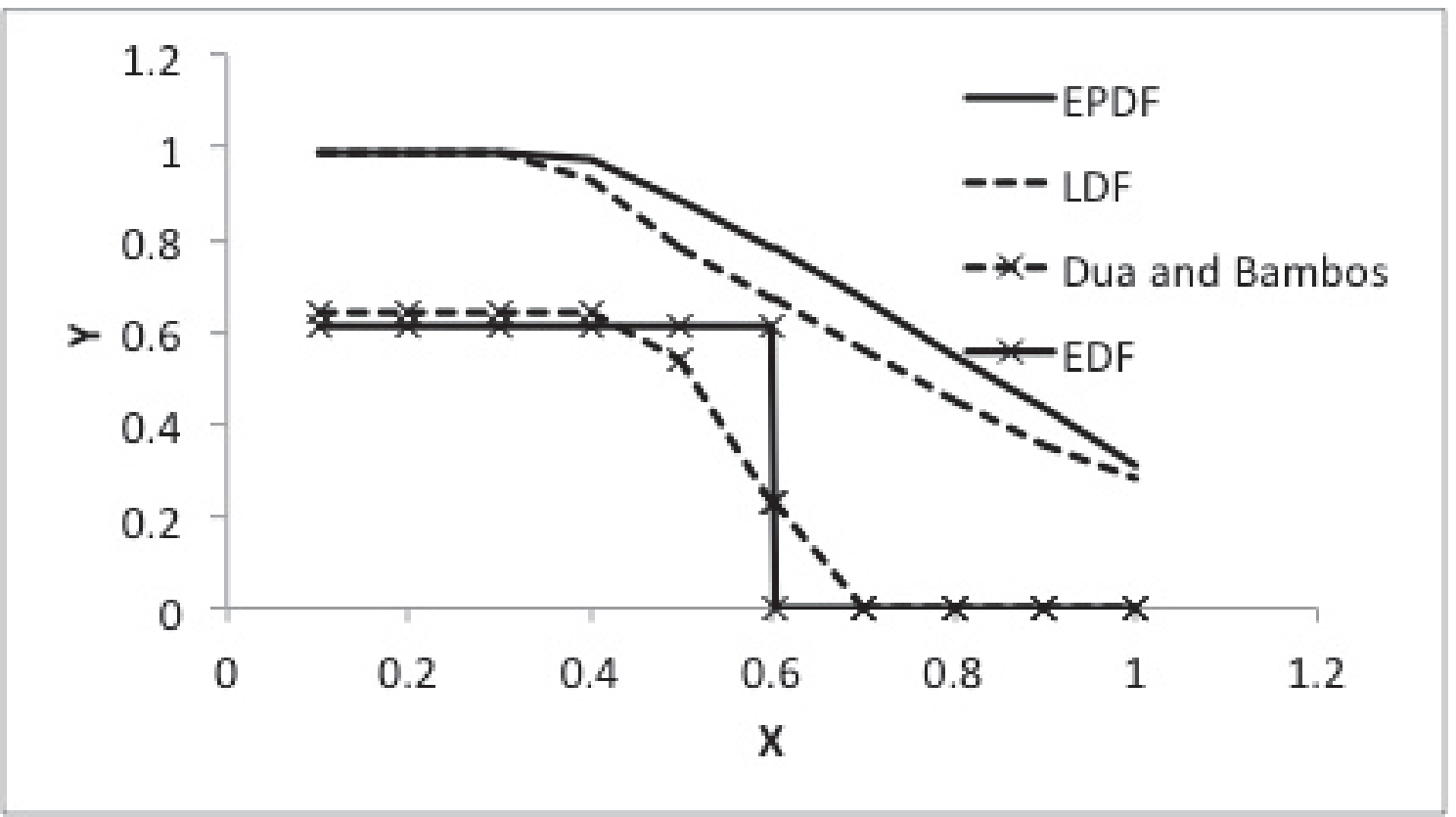}}
\hspace{0.01\linewidth} 
\subfigure[Transporter 2]{
\label{fig:Transporter2} 
\includegraphics[width=2in]{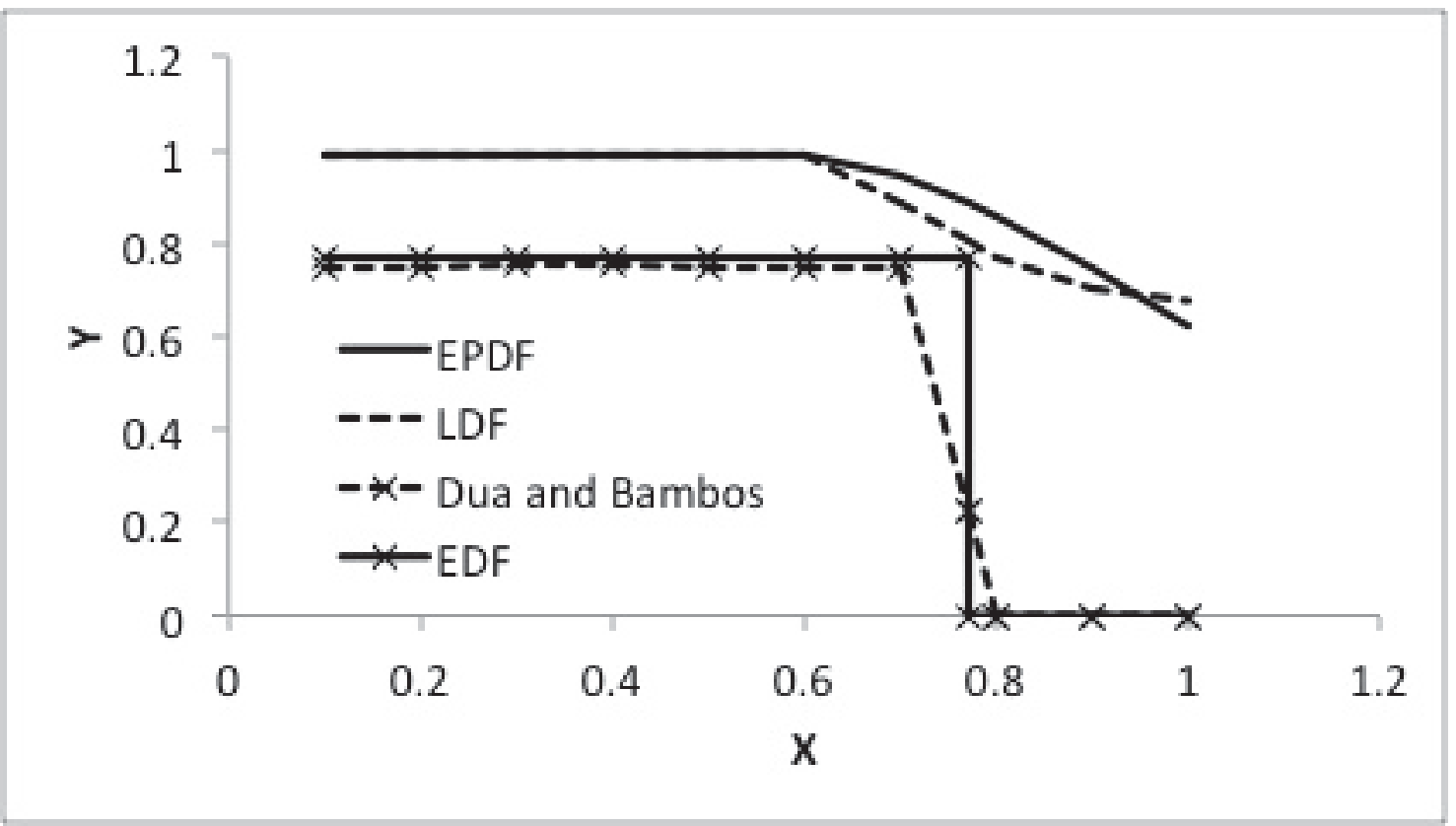}}
\caption{Achieved throughput regions under different policies}\label{fig:simulation:policy}
\end{figure*}

We first present the performance of all the four scheduling policies. Simulation results are shown in Fig. \ref{fig:simulation:policy}. It can be seen that the EPDF policy outperforms all the other three policies in all three movies. For a fixed $X$, the difference of achieved $Y$ between the EPDF policy and the LDF policy can be as large as 0.13. Since the LDF policy does not take into account the possibly different delay bounds for different packets, it may spend a lot of time transmitting packets whose debts are large but deadlines are faraway. Both the EDF policy and the policy in \cite{AD07} result in poor performance. The EDF policy does not take the throughput requirements of clients into account, and hence may be severely unfair. On the other hand, while the policy in \cite{AD07} makes scheduling decisions based on the deadlines of packets, the throughput requirements and channel reliabilities of clients, it still results in suboptimal performance. This suggests that only aiming to minimize the total dropping cost may be unfair to some flows, and hence the policy in \cite{AD07} is not suitable for providing performance guarantees for individual clients. 

\subsection{Impact of Delay Bounds}

\begin{figure*}[t]
\subfigure[Harry Potter]{
\label{fig:HarryPotter} 
\includegraphics[width=2in]{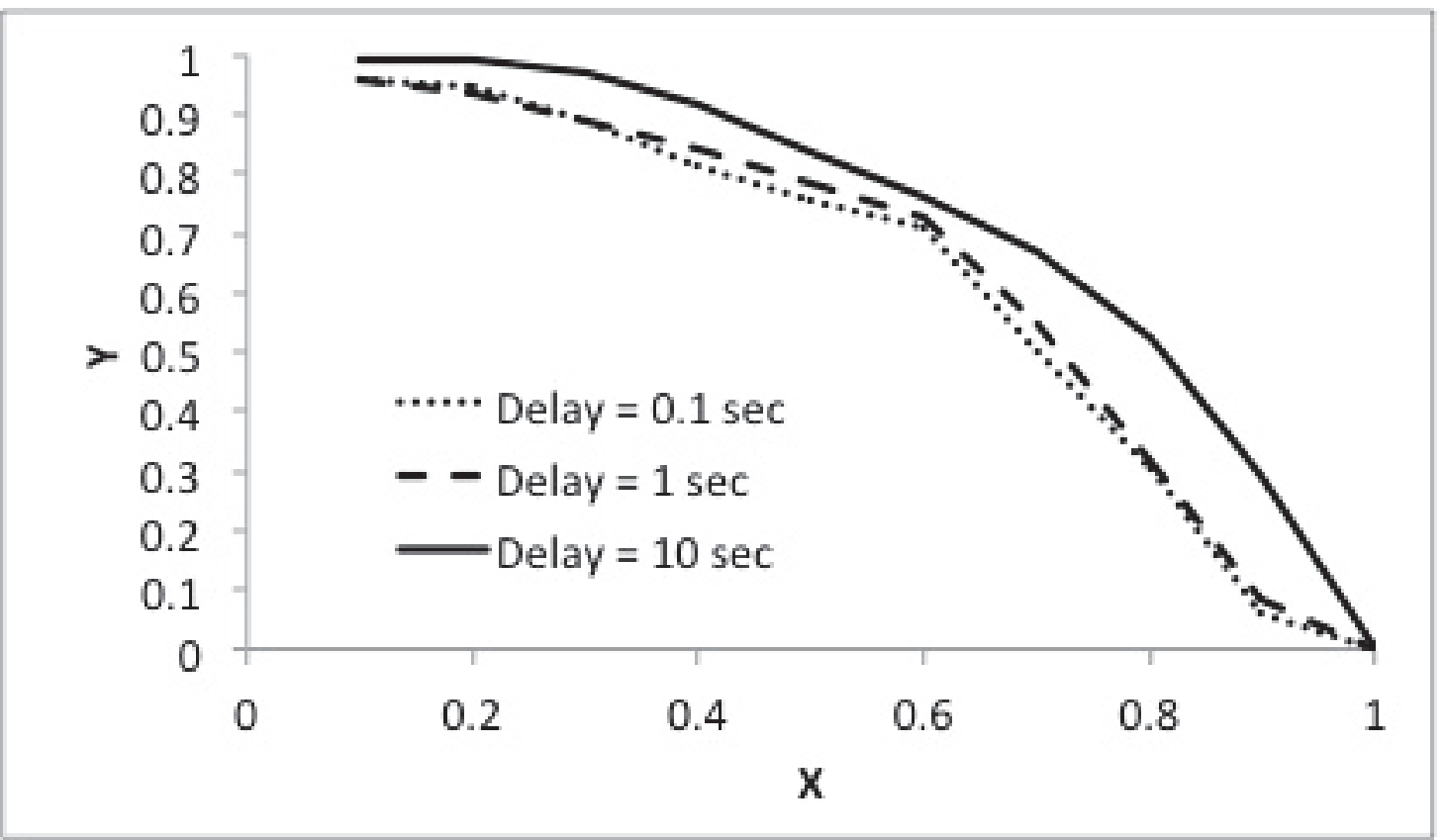}}
\hspace{0.01\linewidth} 
\subfigure[Finding Neverland]{
\label{fig:FindingNeverland} 
\includegraphics[width=2in]{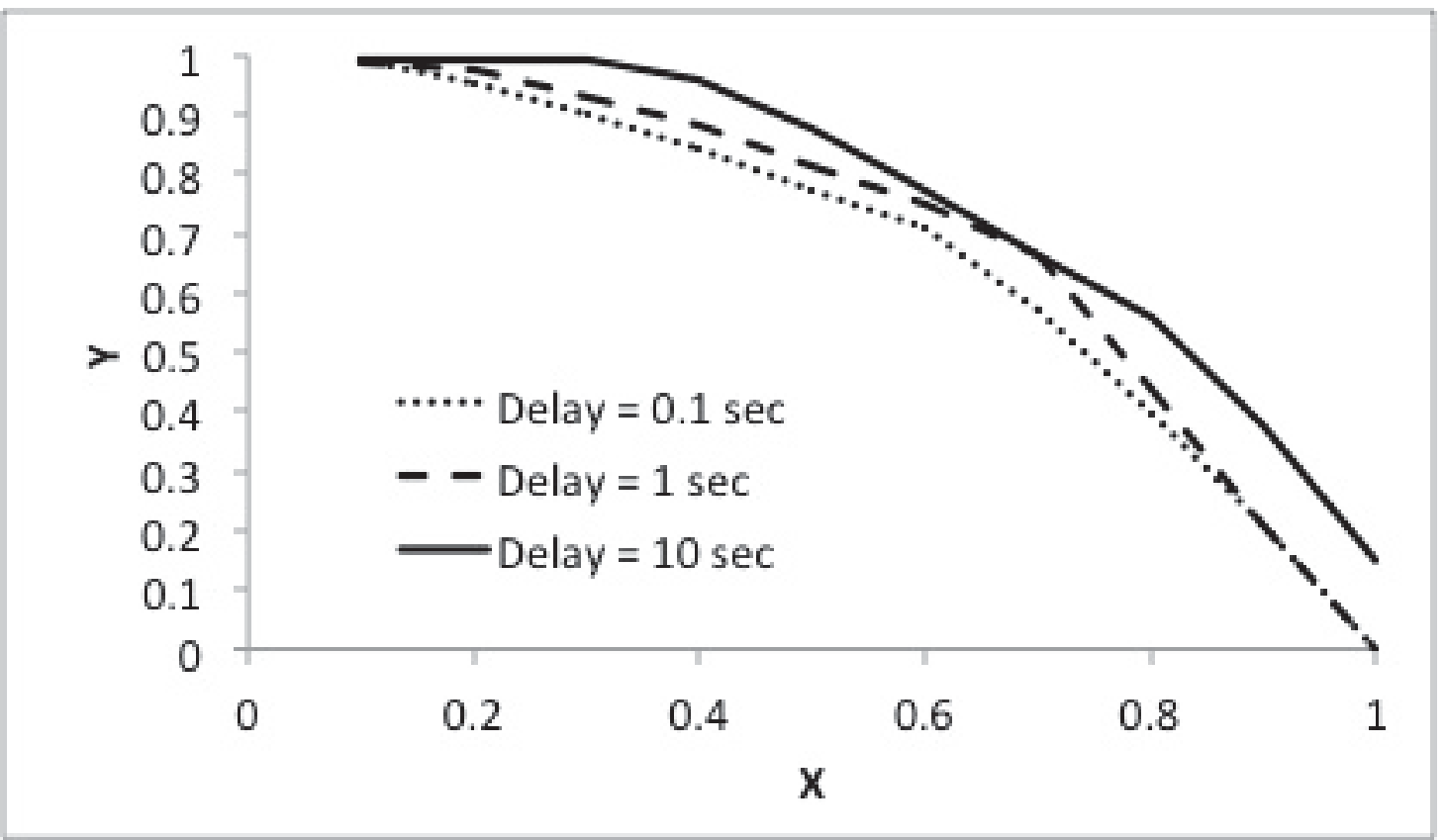}}
\hspace{0.01\linewidth} 
\subfigure[Transporter 2]{
\label{fig:Transporter2} 
\includegraphics[width=2in]{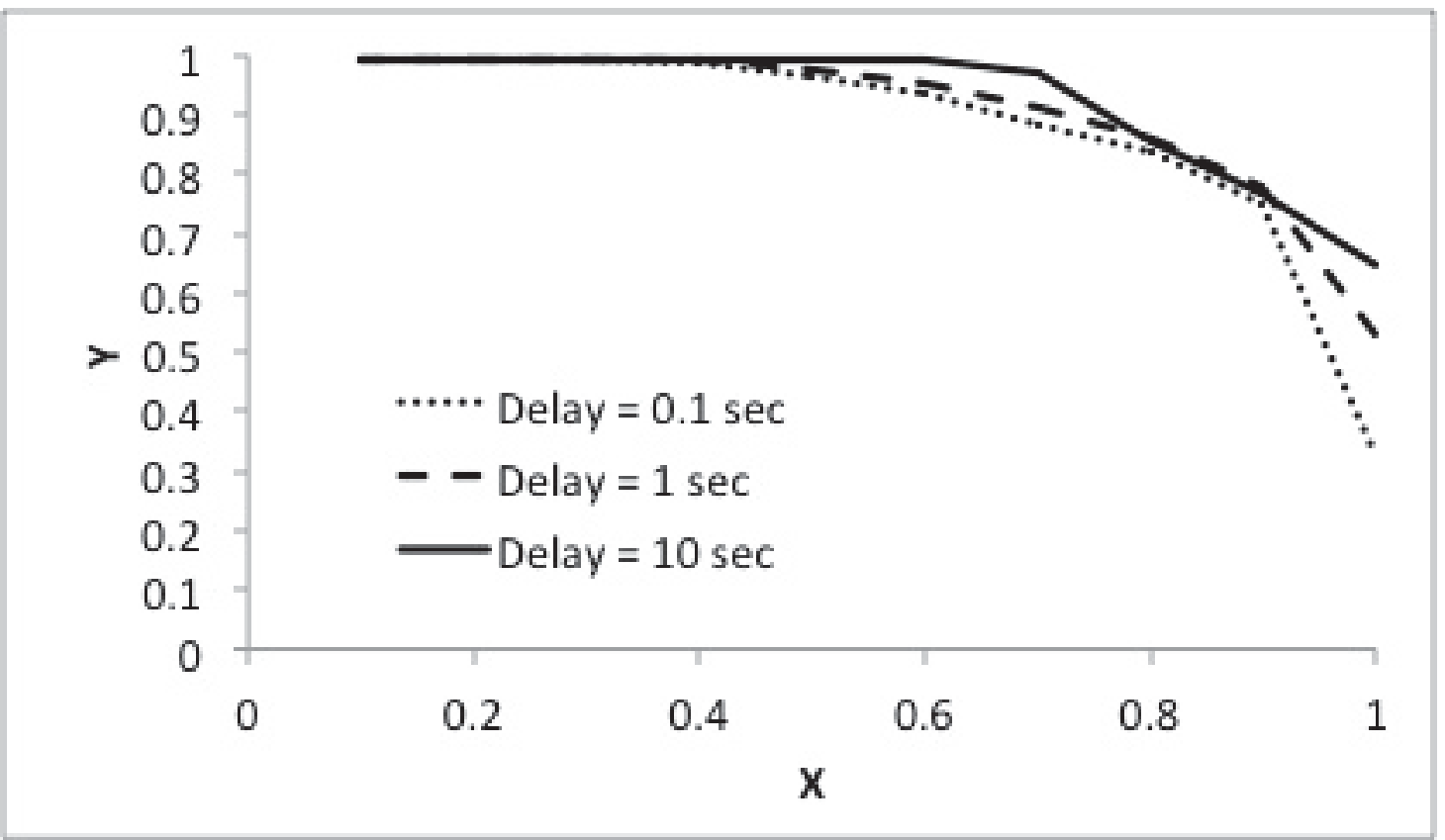}}
\caption{Achieved throughput regions for different delay bounds.}\label{fig:simulation:buffer}
\end{figure*}

Next, we investigate the influence of delay bounds on throughput performance. We again consider the system with 30 clients. The settings of the system are almost the same as in Section \ref{section:simulation:policy}. The only difference is that all clients require the same delay bound. We compare the achieved throughput regions for different delay bounds.

Simulation results are shown in Fig. \ref{fig:simulation:buffer}. In all three movies, the achieved throughput regions increase with delay bounds. For a fixed $X$, the largest achievable $Y$ can be increased by 0.2 when the delay bounds are increased from 0.1 second to 1 second. When we further increase the delay bound from 1 second to 10 seconds, the largest achievable $Y$ can be increased by 0.21 for some fixed $X$. As the delay bounds increase, live video streams become less vulnerable to bursty packets and bursty failed transmissions, and hence the throughput performance improves.

\subsection{Influence of Frame Sizes}

\begin{figure*}[t]
\subfigure[Harry Potter]{
\label{fig:HarryPotter} 
\includegraphics[width=2in]{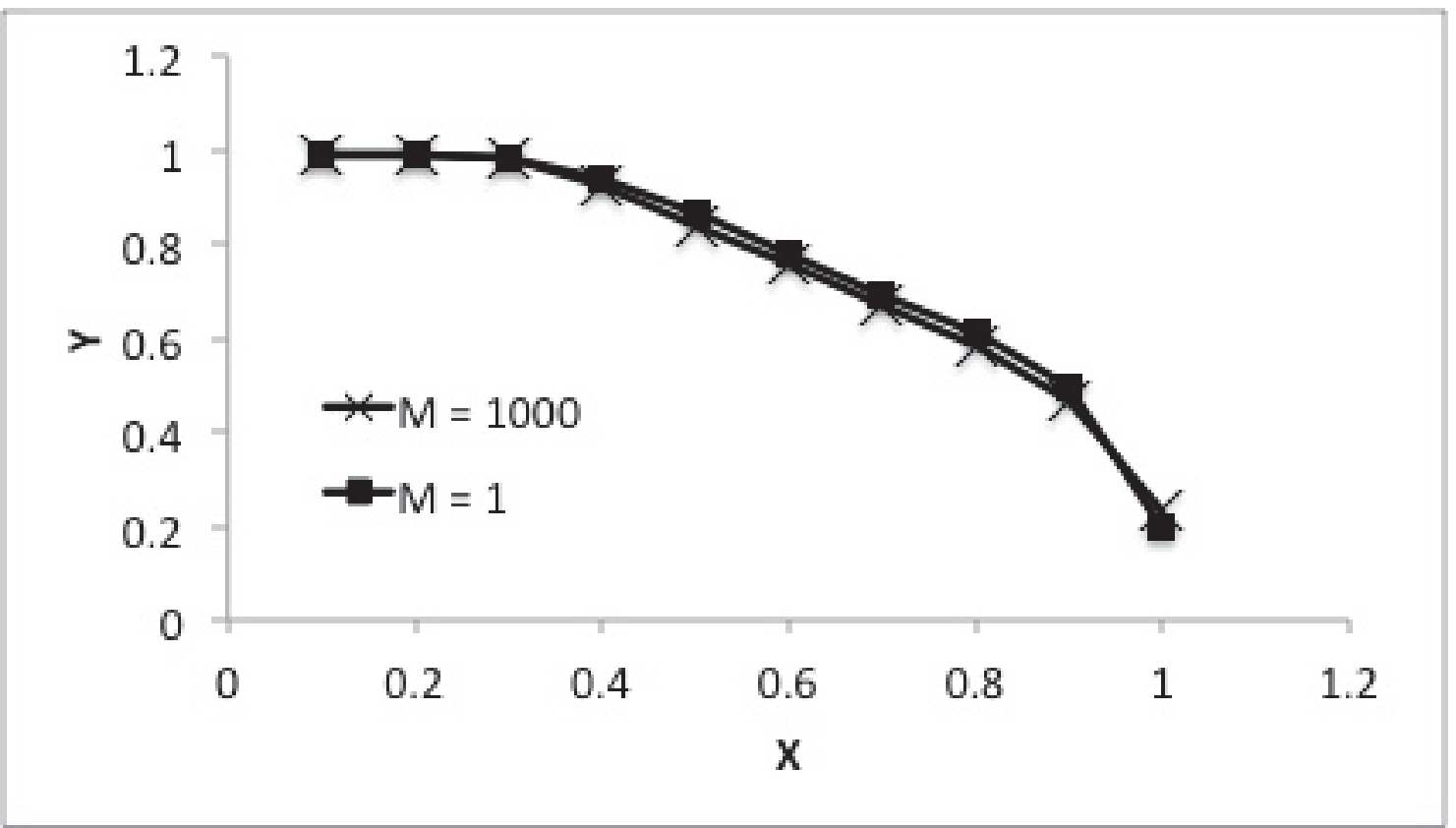}}
\hspace{0.01\linewidth} 
\subfigure[Finding Neverland]{
\label{fig:FindingNeverland} 
\includegraphics[width=2in]{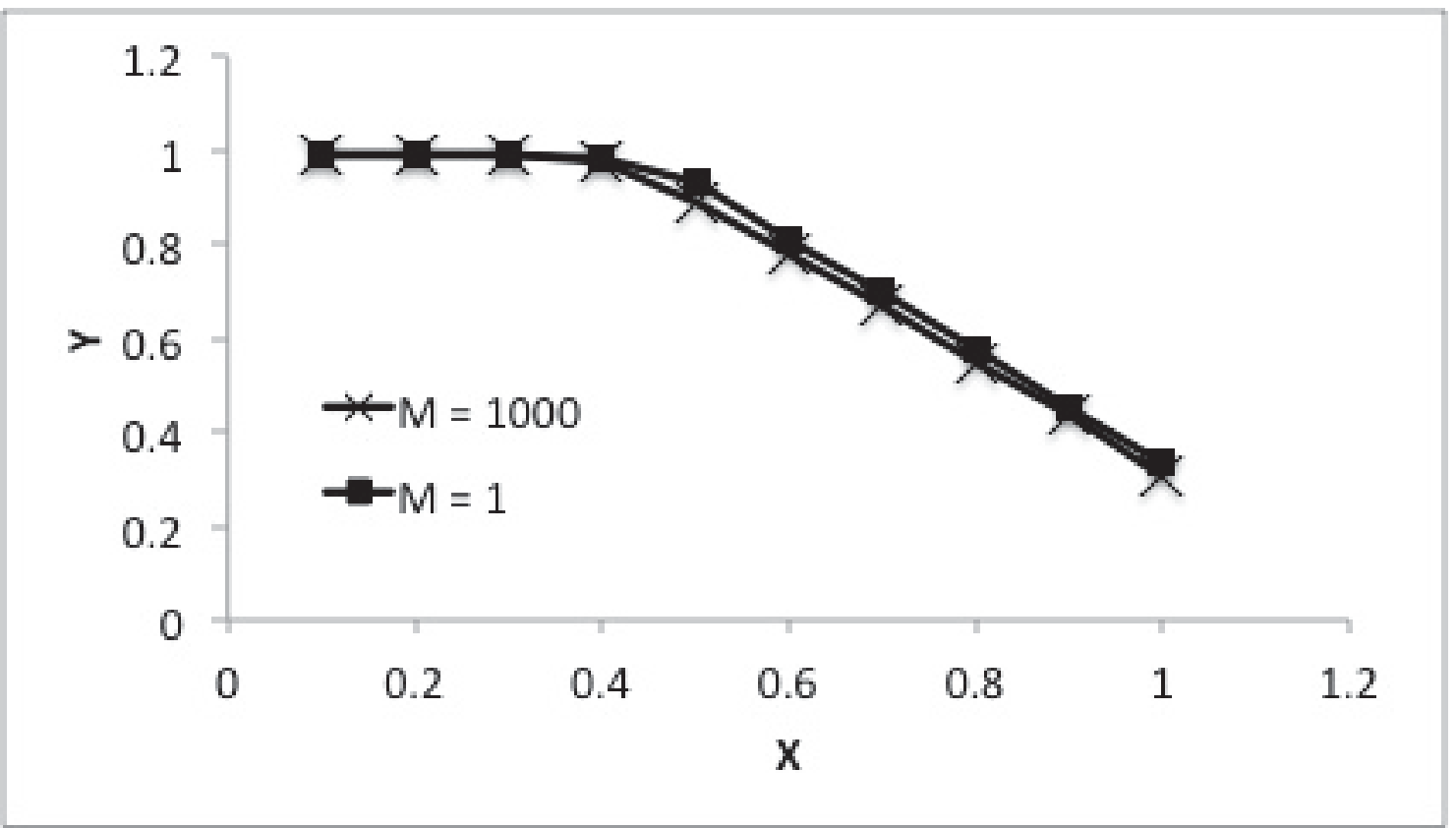}}
\hspace{0.01\linewidth} 
\subfigure[Transporter 2]{
\label{fig:Transporter2} 
\includegraphics[width=2in]{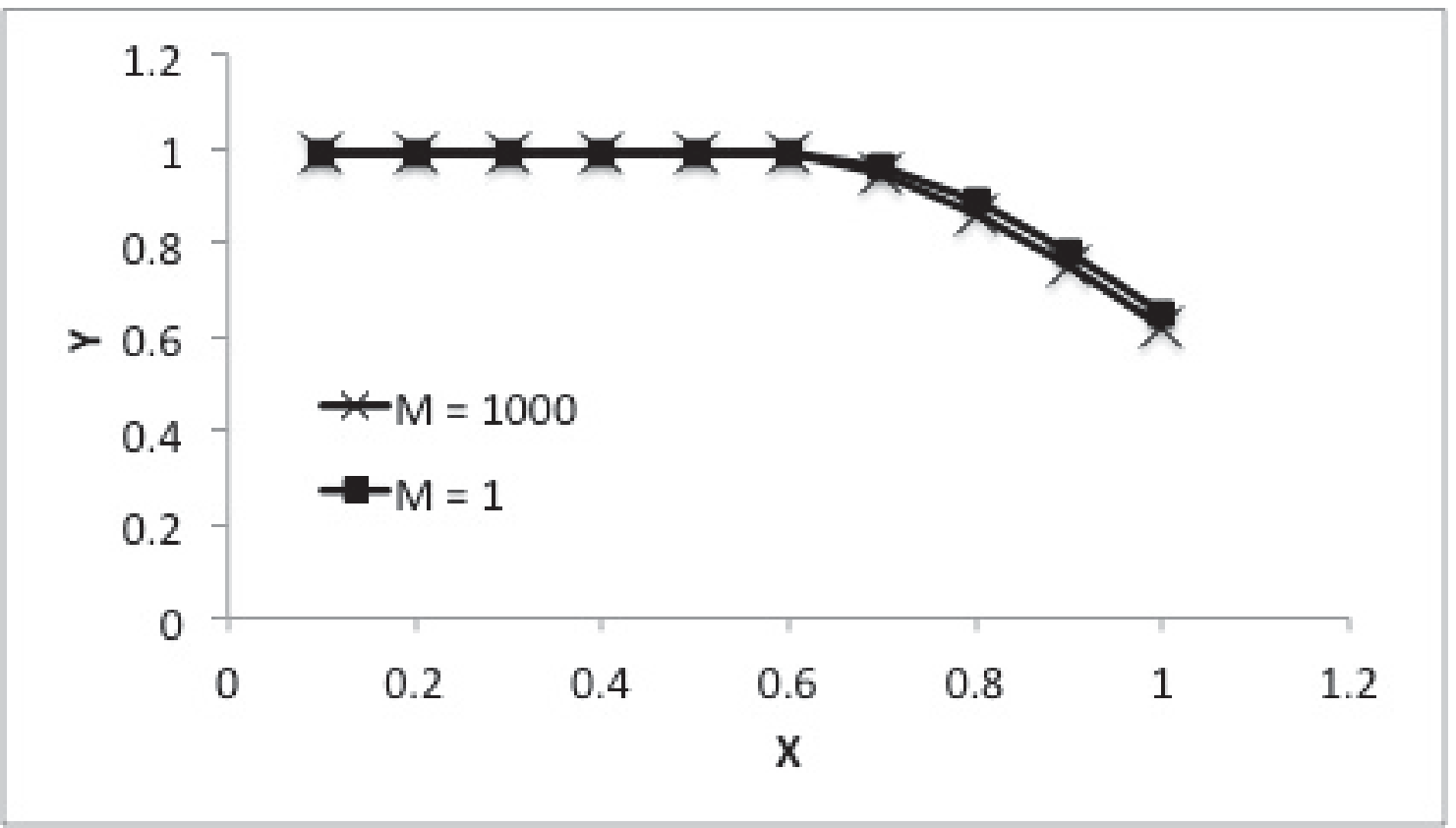}}
\caption{Achieved throughput regions for different frame sizes.}\label{fig:simulation:M}
\end{figure*}

In Example \ref{example:smallM}, it is shown that the EPDF policy may not be capacity achieving if the frame size, i.e., the value of $M$, is too small. In this section, we investigate the influence of frame sizes on system performance.

We consider the same system as that in Section \ref{section:simulation:policy}. We simulate the performance of the EPDF policy under two extreme cases: $M=1$ and $M=1000$. Simulation results are shown in Fig. \ref{fig:simulation:M}. Surprisingly, the achieved throughput regions for the two cases are virtually indistinguishable. For any movie and any fixed $X$, the difference between the achievable $Y$ is less than 0.03 under the two cases.

Next, we investigate the short-term performance under both cases of $M=1$ and $M=1000$. We assume that all clients require a portion $X$ of their packets to be delivered on time. The value of $X$ is set to be 0.65 for Harry Potter, 0.7 for Finding Neverland, and 0.8 for Transporter 2, since Fig. \ref{fig:simulation:M} shows that these settings are feasible. For each movie, we find the client that generates the most packets during the simulation, and then plot the number of packets that this client receives in every second of the simulations.

Simulation results are shown in Fig. \ref{fig:simulation:short}. The number of packets delivered to the client in each second varies much from second to second when $M=1000$. On the other hand, when $M=1$, the number of packets delivered to the client does not vary as much. Although both $M=1000$ and $M=1$ result in similar long-term throughputs for each client, smaller $M$ can lead to much better short-term performance.

\begin{figure*}[t]
\subfigure[Harry Potter]{
\label{fig:HarryPotter} 
\includegraphics[width=2in]{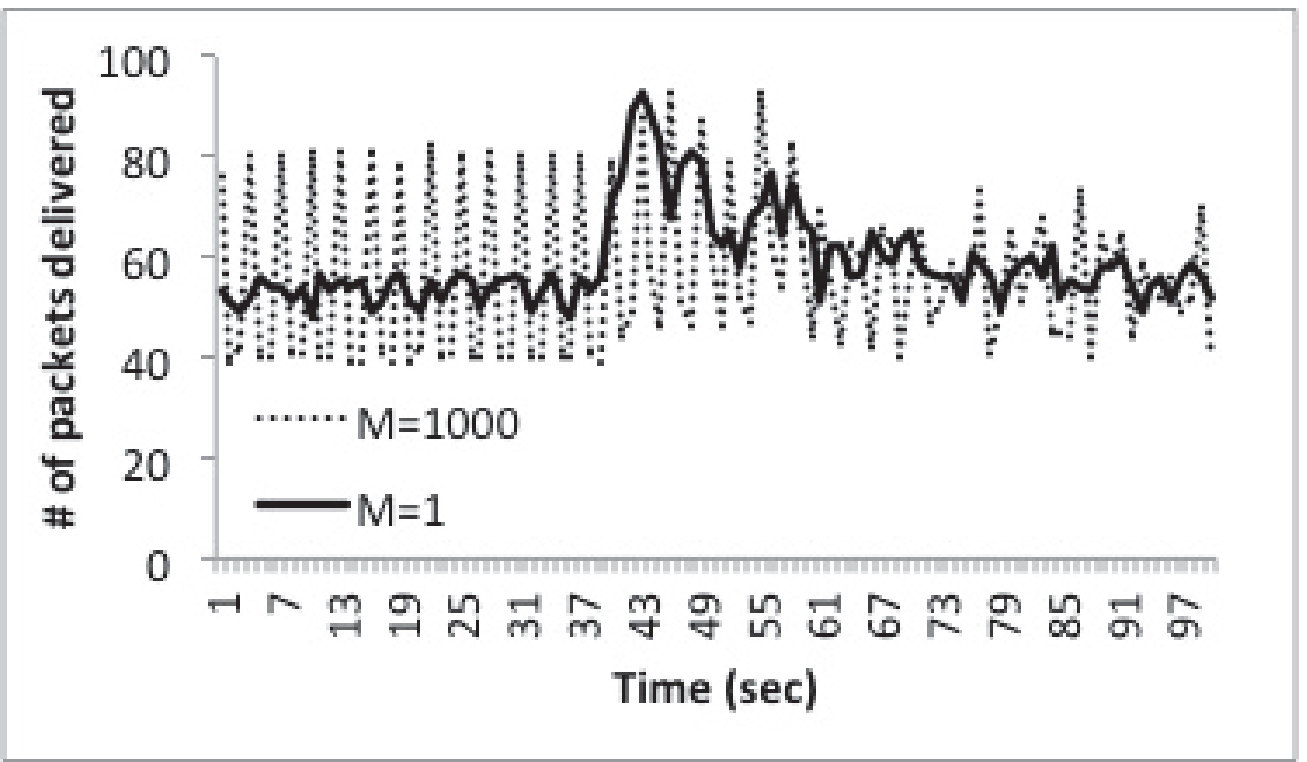}}
\hspace{0.01\linewidth} 
\subfigure[Finding Neverland]{
\label{fig:FindingNeverland} 
\includegraphics[width=2in]{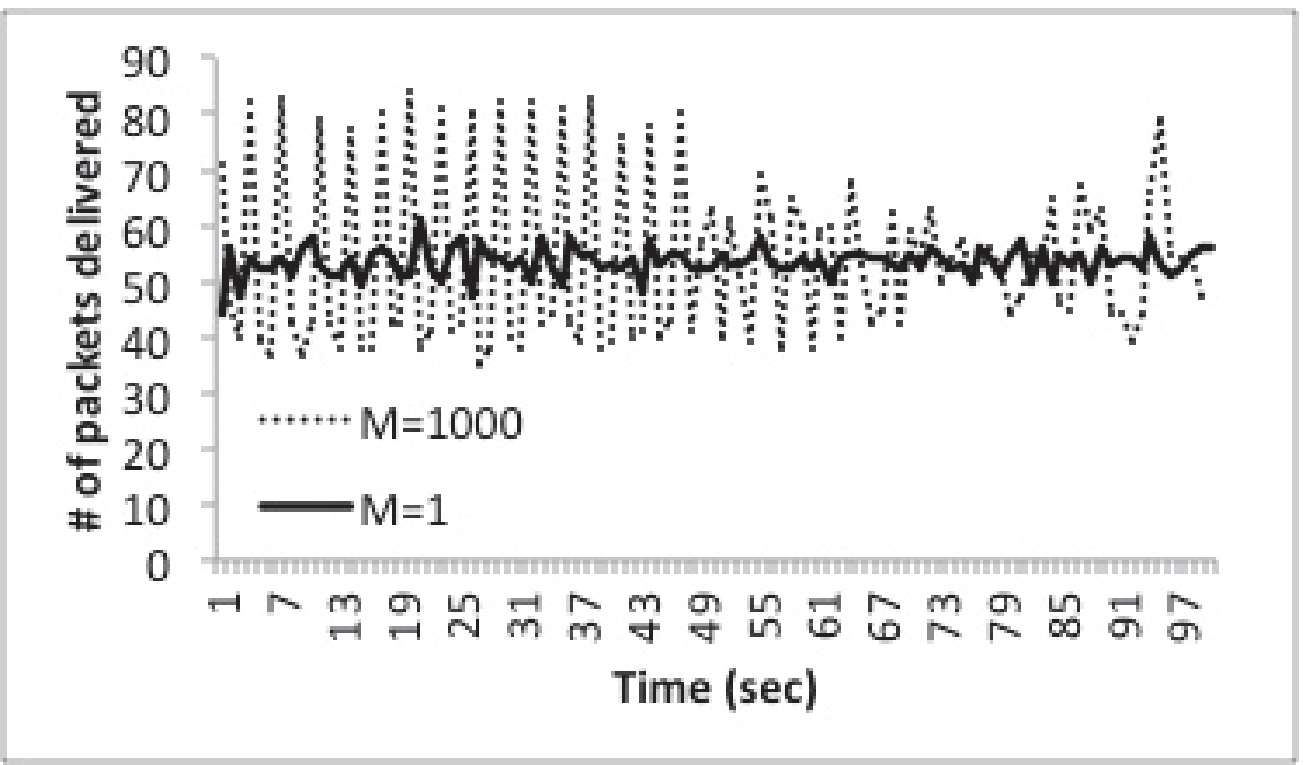}}
\hspace{0.01\linewidth} 
\subfigure[Transporter 2]{
\label{fig:Transporter2} 
\includegraphics[width=2in]{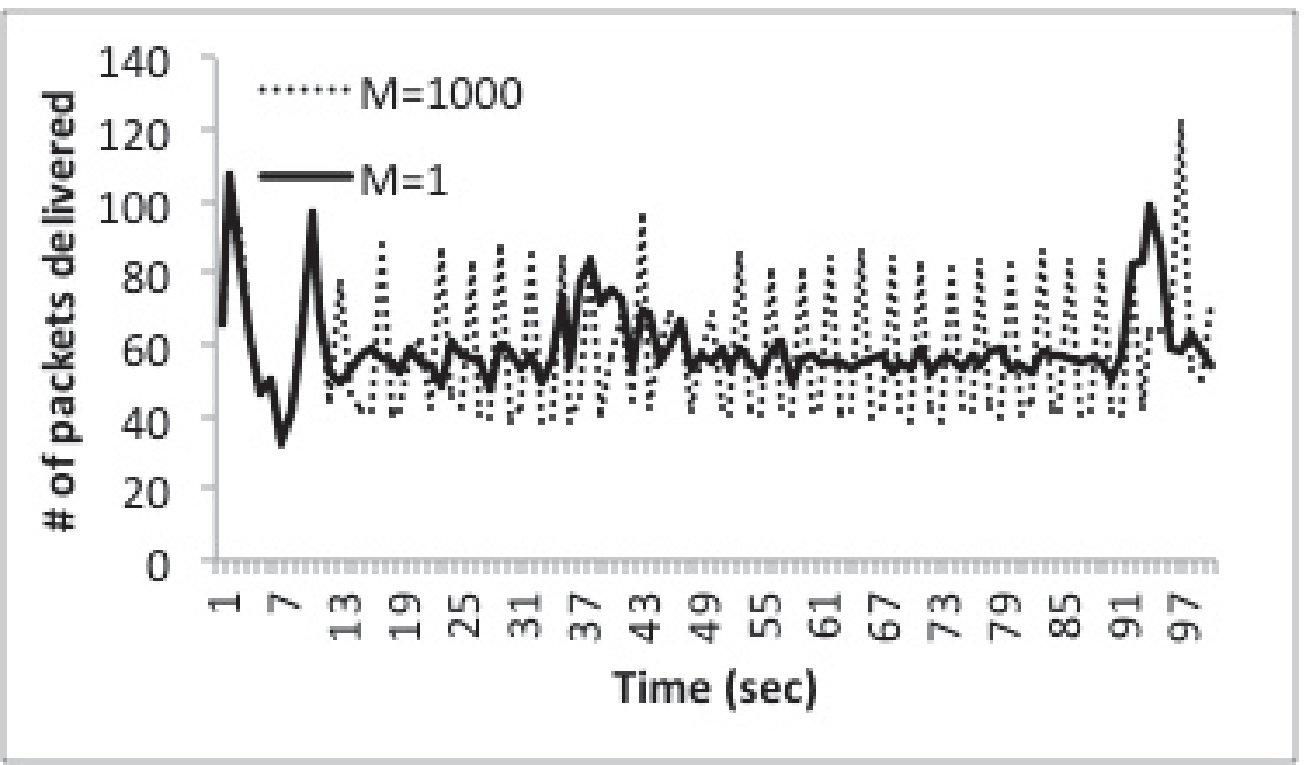}}
\caption{Achieved throughput regions for different frame sizes.}\label{fig:simulation:short}
\end{figure*}

\section{Conclusions}	\label{section:conclusions}

We have studied the problem of serving multiple live video streams to wireless clients with various delay and throughput requirements, as well as heterogeneous link reliabilities. We have analyzed a model that jointly considers several practical concerns for live video streaming over wireless networks, including the traffic patterns of streams, the unreliabilities of wireless transmissions, and the delay and throughput requirements of clients. We have presented a systematic approach for precisely characterizing the capacity region. We have proposed a simple scheduling policy, the EPDF policy, and proved that it is able to support every vector of throughputs that is strictly in the capacity region. The utility of the EPDF policy has been demonstrated through trace-based simulations, where it is seen that it appears to outperform other policies by a large margin.

\bibliographystyle{acm}
\bibliography{reference}

\end{document}